\documentclass[11pt]{article}
\usepackage[margin=1in]{geometry}
\usepackage[english]{babel}
\usepackage{xr}
\usepackage{booktabs}
\usepackage{setspace}
\usepackage{cancel}
\usepackage[normalem]{ulem}

\onehalfspacing


\usepackage{tikz}
\usetikzlibrary{arrows,automata,positioning}

\usepackage{mathrsfs,hyperref, algorithm}
\usepackage[]{amsmath}
\usepackage{amsthm}
\usepackage{fix-cm}
\usepackage[]{amssymb}
\usepackage[]{latexsym}
\usepackage[right]{eurosym}
\usepackage[T1]{fontenc}
\usepackage[]{graphicx}
\usepackage[]{epsfig}
\usepackage{fancyhdr}
\usepackage{pstricks}
\usepackage{multirow}
\usepackage{natbib}
\usepackage{subcaption}

\allowdisplaybreaks

\makeatletter

\newcommand{\bbr}{\mathbb{R}}
\newcommand{\E}{\mathbb{E}}

\newcommand{\bbp}{\mathbb{P}}



\newcommand{\fcal}{\mathcal{F}}

\newcommand{\dcal}{\mathcal{D}}

\newcommand{\scal}{\mathcal{S}}
\newcommand{\ncal}{\mathcal{N}}
\newcommand{\lcal}{\mathcal{L}}

\newcommand{\ccal}{\mathcal{C}}

\newcommand{\xcal}{\mathcal{X}}

\newcommand{\LL}{\mathbb{L}}

\newcommand{\N}{\mathbb{N}}



\newcounter{modcount}
\newcommand{\modulo}[2]{%
\setcounter{modcount}{#1}\relax
\ifnum\value{modcount}<#2\relax
\else\relax
\addtocounter{modcount}{-#2}\relax
\modulo{\value{modcount}}{#2}\relax
\fi}
\newcommand{\tablepictures}[4][c]{\begin{tabular}[#1]{@{}c@{}}#2\vspace{0.5cm}\\(\alph{#4}) #3\end{tabular}}
\newcounter{gridsearch}
\newcommand{\tabpic}[2]{
    \stepcounter{gridsearch}
    \modulo{\thegridsearch}{2}
    \ifnum\value{modcount}=0
        \tablepictures[t]{#1}{#2}{gridsearch}\\[2.0cm]
    \else
        \tablepictures[t]{#1}{#2}{gridsearch}&~&
    \fi
}

\newcommand*\circled[1]{\tikz[baseline=(char.base)]{
            \node[shape=circle,draw,inner sep=2pt] (char) {#1};}}

\makeatother
\hyphenation{Glei-chung sto-cha-sti-sche Ge-burts-tags-kind ab-ge-ge-be-nen exi-stie-ren re-pre-sen-tation finanz-markt-aufsicht Modell-un-sicher-heit finanz-markt-risi-ken rung-gal- dier gering-sten} \arraycolsep1mm

\newtheorem{lemma}{Lemma}[section]
\newtheorem{proposition}[lemma]{Proposition}
\newtheorem{theorem}[lemma]{Theorem}
\newtheorem{corollary}[lemma]{Corollary}
\newtheorem{definition}[lemma]{Definition}
\newtheorem{example1}[lemma]{Example}
\newtheorem{rem1}[lemma]{Remark}

\newtheorem{alg1}[lemma]{Algorithm}
\newtheorem{me1}[lemma]{Mechanism}

\newenvironment{remark}{\begin{rem1}\rm}{\end{rem1}}
\newenvironment{example}{\begin{example1}\rm}{\end{example1}}

\usepackage{color}


\newcommand{\T}{\top}
\newcommand{\diag}{\operatorname{diag}}

\DeclareMathOperator*{\FIX}{FIX}

\DeclareMathOperator*{\argmin}{arg\,min}

\DeclareMathOperator*{\essinf}{ess\,inf}

\newcommand\ind[1]{\mathbb{I}_{\{#1\}}}

\begin{document}

\title{Optimal Network Compression}
\author{Hamed Amini \thanks{Robinson College of Business, Georgia State University, Atlanta, GA 30303, USA, email: {\tt hamini@gsu.edu}} \and
Zachary Feinstein \thanks{Stevens Institute of Technology, School of Business, Hoboken, NJ 07030, USA. \tt{zfeinste@stevens.edu}}}
\date{\today}
\maketitle
\abstract{
This paper introduces a formulation of the optimal network compression problem for financial systems.  This general formulation is presented for different levels of network compression or rerouting allowed from the initial interbank network.  We prove that this problem is, generically, NP-hard.  We focus on objective functions generated by systemic risk measures under shocks to the financial network. 
We use this framework to study the (sub)optimality of the maximally compressed network.
We conclude by studying the optimal compression problem for specific networks; this permits us to study, e.g., the so-called robust fragility of certain network topologies more generally as well as the potential benefits and costs of network compression.
In particular, under systematic shocks and heterogeneous financial networks the robust fragility results of~\cite{AOT15} no longer hold generally.

\bigskip

\noindent {\bf Keywords:} Finance, systemic risk, financial networks, portfolio compression, systematic shocks, genetic algorithm.
}

\section{Introduction}\label{sec:intro}


The financial crisis 2007-2009 has highlighted the importance of network structure on the amplification of the initial shock to the level of the global financial system, leading to an economic recession. In response to market dysfunctions, the US congress enacted the largest regulations of the financial market, in the form of the "Dodd-Frank Wall Street Reform and Consumer Protection Act" of 2010, to ensure financial stability and reduce systemic risk. Among the regulations is that the majority of over-the-counter (OTC) derivatives should be centrally cleared so as to reduce counterparty risk and ensure financial stability.
Portfolio compression is another way to modify the financial network structure.\footnote{We interchangeably use the terms `portfolio compression' and `network compression' to mean the process of modifying the financial network structure by reducing the gross positions in the network while keeping net positions unchanged.  Though `portfolio compression' is more common in the literature (see, e.g.,~\cite{d2019compressing,veraart2019does}), we introduce the term `network compression' as it explicitly describes the shrinking of the network size being studied.} Several parties in the network enter into a multi-lateral netting agreement to essentially reduce the gross exposures while keeping the net positions unchanged.
The main provider of such systems is TriOptima~\cite{TriOptima}, who have compressed over $\$2$ quadrillion in gross notional. 


For the purposes of this paper we consider a \emph{known} initial finite network of obligations over which we seek an ``optimal'' network compression.  This is in contrast to the random graph structure considered in \cite{ACM16, EGJ14, GK10}.  Though the initial network compression formulation is presented without consideration of the network clearing procedure, we will primarily focus on clearing based on \cite{EN01,RV13} with collateral as presented in, e.g.,~\cite{ghamami2021collateralized,veraart2019does}.  Under the DebtRank \cite{battiston2012debtrank,bardoscia2015debtrank} clearing, a version of the optimal network compression problem as a mixed integer linear program was proposed in \cite{diem2020minimal}.  Other notions of contagion could be added to our clearing problem as well, e.g., portfolio overlap \cite{CFS05,AFM16,feinstein2015illiquid}; such additional avenues of contagion would influence the systemic risk and may impact the optimal compression.  We focus on the collateralized Eisenberg-Noe framework so as to remain in a, relatively, simple setting.

\subsection{Literature Review}\label{sec:intro-lit}

The optimal compression problem is related to studies in many other works in the Eisenberg-Noe clearing framework.  For instance, the compression constraints can be viewed as the feasibility conditions for a network reconstruction problem via the network rerouting problem (see~Example \ref{ex:compression}) in which only the aggregate statistics for each bank need be known; \cite{ACP14,UW04,HalajKok:ECB,GV16,GV19} propose methods to sample either deterministically or stochastically from this feasible region.  \cite{AOT15} considered the optimal rerouting problem of a system of identical banks under i.i.d.\ Bernoulli shocks.  That work found that the completely connected system has a ``robust fragility'' property, i.e., it is the most stable for small shocks but the least stable for large shocks (and vice versa for the ring network).  \cite{feinstein2017sensitivity} studies the sensitivity of the Eisenberg-Noe clearing payments w.r.t.\ the relative liability matrix; that work uses these sensitivities in order to find the best and worst case \emph{directions} for rerouting of the liability network.  \cite{CCY16} utilizes majorization of the financial networks in order to guarantee the relative health of two financial networks (w.r.t.\ the number of defaulting banks). The network compression is also related to the literature on analyzing the consequences of different netting mechanisms in centrally cleared financial markets; see, e.g.,~\cite{duf_zhu_11, duffie2015central, cont2014central, armenti2017central, amini2020systemic, amini2016fully, moallemi14, capponi2015systemic, amini2020clearing}.

We are primarily motivated in this study by two streams of literature: that of~\cite{AOT15} which proposed the robust fragility of the completely connected network and those of~\cite{d2019compressing,veraart2019does} which proposed frameworks for network compression without considering which levels of compression reduce systemic risk.  In this paper we seek to unify these problems into a single optimization framework, which we call \emph{optimal network compression}.  We further merge these problems with the systemic risk measures of~\cite{chen2013axiomatic,kromer2013systemic} so as to determine the compressed network that minimizes systemic risk; as shown in~\cite{veraart2019does}, network compression need not improve systemic risk.

\subsection{Primary Contributions and Organization}\label{sec:intro-contribution}

The primary innovations and results of this paper are in multiple directions.  
First, we introduce a general formulation of the optimal network compression problem for financial systems in the specific context of systemic risk measurement.  This general formulation is presented for different levels of network compression or rerouting allowed from the initial interbank network.  
Furthermore, we prove that this optimal network compression problem is generically NP-hard.  This motivates us to consider a machine learning approach to approximating the optimal financial network. In particular, we use a genetic algorithm and show that, in a simple example of three-bank system, the algorithm performs as well as the optimal network found using an interior point algorithm for the rerouting problem and non-conservative compression.

As a second contribution, we study the (sub)optimality of the maximally compressed network when considered through the lens of systemic risk measure objectives. The maximal compression problem, as studied in~\cite{d2019compressing}, focuses on removing the maximal amount of liabilities in the system subject to certain financial constraints.  Under the typical compression constraints, this problem is a linear programming problem and, thus, can be solved in polynomial time.  
We propose simple metrics to test the (sub)optimality of the maximally compressed network in general stochastic settings. We show that maximal and optimal compression coincide so long as the system is secure enough.  Our numerical case studies illustrates that, under larger stresses, the maximal compression may rapidly become suboptimal.

Third, in order to find tractable analytical forms for the systemic risk measures, as in~\cite{BF18comonotonic}, we consider stress scenarios under systematic shocks. These analytical forms are novel extension of the formulas provided in~\cite{BF18comonotonic} to the systemic risk measures of~\cite{chen2013axiomatic,kromer2013systemic}. Such systematic stress scenarios allow us numerically implement the optimal compression problem tractably. We implement this framework in two case studies.  First, we extend the work of, e.g.,~\cite{AOT15} to consider the robustness of various network topologies for heterogeneous networks under systematic shocks.  In particular, in numerical experiments, we show that under systematic shocks the robust fragility results of~\cite{AOT15} no longer hold generally.  
Second, we consider a financial system calibrated to the 2011 European Banking Authority stress testing data to demonstrate numerical results for a large financial system. 

The organization of this paper is as follows.  In Section~\ref{sec:optimization}, we propose the general optimal network compression problem to be considered throughout this work.  In so doing, we formulate four meaningful types of ``compression'' problems motivated by~\cite{d2019compressing,AOT15} and prove that the optimal compression problem is, generically, NP-hard given each of those constraint sets.  We then focus on a meaningful form for the objective function, namely the systemic risk measures of~\cite{chen2013axiomatic,kromer2013systemic}, in Section~\ref{sec:objective}.  Specifically, in Appendix~\ref{sec:systematic}, we present analytical forms for specific examples of these systemic risk measures under systematic shocks to the financial system.  
Within Section~\ref{sec:maximal}, we study the ``maximal'' compression algorithm (i.e., as proposed in~\cite{d2019compressing}) in order to study conditions on the optimality of this compressed network.
This is followed by two case studies in Section~\ref{sec:casestudy}.  First, we consider a simple three bank system that serves the dual purposes of validating our algorithmic approach as well as testing the results of~\cite{AOT15} in a heterogeneous system with systematic shocks.  Second, we study a network calibrated to the 2011 European Banking Authority dataset; with this network we study the improvements in systemic risk via optimal compression and maximal compression 
on a large, realistic financial system.  Section~\ref{sec:conclusion} concludes.

\section{The optimization problem}\label{sec:optimization}

Throughout this work we will consider a system of $n$ banks with obligations $L_{ij} \geq 0$ from bank $i$ to $j$; as is typical, we assume $L_{ii} = 0$ for every bank $i$, i.e., no bank has any obligations to itself.  Additionally, each bank $i$ will be assumed to have liabilities external to the banking network $L_{i0} \geq 0$.  These external obligations are sometimes called societal obligations; we will interchangeably use these terms throughout this work. The set of all such networks is denoted by $\lcal := \{L \in \bbr^{n \times (n+1)}_+ \; | \; L_{ii} = 0 \; \forall i\}$.

In this section we present the primary optimization problem of interest in this work.  To do so we introduce the notion of portfolio compression which we take from~\cite{d2019compressing}.  
In this work we seek to find the optimal network compression problem, i.e., for an initial liability network $\tilde L \in \lcal$, we wish to minimize some objective $f: \lcal \to \bbr$ subject to compression constraints $\ccal: \lcal \to 2^\lcal$ (which are formally presented in Definition~\ref{defn:compression} below).
\begin{equation}
\label{eq:optimization} \min\left\{f(L) \; | \; L \in \ccal(\tilde L)\right\}.
\end{equation}
The objective function $f$ can be directly computed from the network statistics (e.g., network entropy) or the results of systemic risk measures (see, e.g.,~\cite{chen2013axiomatic,kromer2013systemic}).  Note that  in general this function may also depend on the initial liability network $f(L)=f(L;\tilde{L})$, but we drop this from the notation for compactness. 
We will discuss the systemic risk measure based objective functions in the following sections.  Notably, as these objective functions are in general nonconvex, this optimization problem might be hard; in fact, we will show that (generically) this problem is NP-hard given certain objectives and network compression based constraints in Theorem~\ref{thm:NP}.

\begin{remark}\label{rem:max-compress}
Prior works on network compression, e.g.,~\cite{d2019compressing}, focus on removing the maximal amount of liabilities in the system subject to certain financial constraints.  That is, with objective $f(L) := \sum_{i = 1}^n \sum_{j = 0}^n L_{ij}$.  In particular, with the compression constraints highlighted within this work, \eqref{eq:optimization} becomes a linear programming problem and can be solved in polynomial time.  Other, non-optimization based, algorithms for undertaking this compression are presented in~\cite{d2019compressing}.  In contrast, we are motivated, as in~\cite{veraart2019does}, to study partial compression to determine the optimal level of compression; \cite{veraart2019does} focuses on conservative compression (defined below) with a strict definition for optimality related to the set of defaulting banks.
\end{remark}

The constraint set $\ccal(\tilde L)$ denotes the set of all possibly compressed or rerouted networks consistent with $\tilde L$.  Any such meaningful compression problem satisfies two properties: consistent net liabilities and feasibility as a network.  This is encoded in the following definition consistent with prior works on network compression, e.g.,~\cite{d2019compressing,veraart2019does}.
\begin{definition}\label{defn:compression}
Given an initial financial liability network $\tilde L \in \lcal$, $\ccal(\tilde L)$ is a \textbf{set of compressed networks} if $L \in \ccal(\tilde L)$ implies:
\begin{itemize}
\item \textbf{constant net liabilities}: $\sum_{j = 1}^n [L_{ij} - L_{ji}] + L_{i0} = \sum_{j = 1}^n [\tilde L_{ij} - \tilde L_{ji}] + \tilde L_{i0}$ for every bank $i \in \{1, \dots, n\}$ and
\item \textbf{feasibility}: $L \in \lcal$.
\end{itemize}
\end{definition}
Compare the definition of the set of compressed networks to the General Compression Problem defined in~\cite{d2019compressing}.  We wish to note that the set of compressed networks can often be defined as a convex polyhedron; in fact it is explicitly defined this way in~\cite{d2019compressing} and every specific example we consider in this work  satisfy such a structure.

\begin{example}\label{ex:compression}
In this example we consider 3 types of network compression and, fourth, the rerouting problem, which we will consider throughout this work.  These 3 network compression problems are detailed in~\cite{d2019compressing} with conservative compression studied further in~\cite{veraart2019does}.  We describe these compression problems in varying order from most to least restrictive; though each is financially meaningful, other types of compression can be implemented.  To simplify notation, we will define $L_{0i} := 0$ for every bank $i$ for any network $L \in \lcal$.
\begin{enumerate}
\item {\bf Bilateral compression}: Given an initial network $\tilde L \in \lcal$, bilateral compression allows for the reduction of bilateral exposures only.  That is, the net obligations between banks $i$ and $j$ must always be kept consistent with the initial network construction; additionally, all obligations can only be reduced from their initial levels.  As such, we can define bilateral compression $\ccal^B(\tilde L)$ as:
\begin{align*}
\ccal^B(\tilde L) &:= \left\{L \in \lcal \; | \; \forall i,j: \; L_{ij} - L_{ji} = \tilde L_{ij} - \tilde L_{ji}, \; L_{ij} \in [0,\tilde L_{ij}]\right\}.
\end{align*}
Bilateral compression is special insofar as the most the network can be compressed in this way is defined by obligations: $L_{ij} := \max\{0 , \tilde L_{ij} - \tilde L_{ji}\}$. 
An example of bilateral compression is shown in Figure~\ref{fig:ExBilat}.

\begin{figure}

\centering
\begin{tikzpicture}[scale=\textwidth/15cm, >=stealth',shorten >=1pt,node distance=4cm,on grid,initial/.style    ={}]

  \node[state] (x1) at (0,0) {\footnotesize Bank 1};
 \node[state] (x2)  at (5,0) (x2) {\footnotesize Bank 2};
 \node[state] (x3)  at (2.5,4.33) {\footnotesize Bank 3};

\tikzset{every node/.style={fill=white}}

      \tikzset{mystyle/.style={->,relative=true,in=165,out=15,double=blue}}
      
        \path[->] (x1) edge [mystyle]   node {1} (x2);
          \path[->] (x2) edge [mystyle]   node {10} (x1);
          
           \tikzset{mystyle/.style={->,relative=true,in=165,out=15,double=red}}
  
    \path[->] (x2) edge [mystyle]   node {2} (x3);
    
      \path[->] (x3) edge [mystyle]   node {20} (x2);

            \tikzset{mystyle/.style={->,relative=true,in=165,out=15,double=black}}
            
           \path[->] (x3) edge [mystyle]   node {3} (x1);
    
      \path[->] (x1) edge [mystyle]   node {30} (x3);

  \node[state] (x1) at (7,0) {\footnotesize Bank 1};
 \node[state] (x2)  at (12,0) (x2) {\footnotesize Bank 2};
 \node[state] (x3)  at (9.5,4.33) {\footnotesize Bank 3};

\tikzset{every node/.style={fill=white}}

            \tikzset{mystyle/.style={->,relative=true,in=165,out=15,double=blue}}
      
        \path[->] (x1) edge [mystyle]   node {$1-\mu_1$} (x2);
          \path[->] (x2) edge [mystyle]   node {$10-\mu_1$} (x1);
          
         \tikzset{mystyle/.style={->,relative=true,in=165,out=15,double=red}}
  
    \path[->] (x2) edge [mystyle]   node {$2-\mu_2$} (x3);
    
      \path[->] (x3) edge [mystyle]   node {$20-\mu_2$} (x2);

            \tikzset{mystyle/.style={->,relative=true,in=165,out=15,double=black}}
            
           \path[->] (x3) edge [mystyle]   node {$3-\mu_3$} (x1);
    
      \path[->] (x1) edge [mystyle]   node {$30-\mu_3$} (x3);
      
      \end{tikzpicture}

\caption{Example of bilateral compression with $\mu_1 \in [0,1], \mu_2\in[0,2]$ and $\mu_3\in[0,3]$.}
\label{fig:ExBilat}
\end{figure}
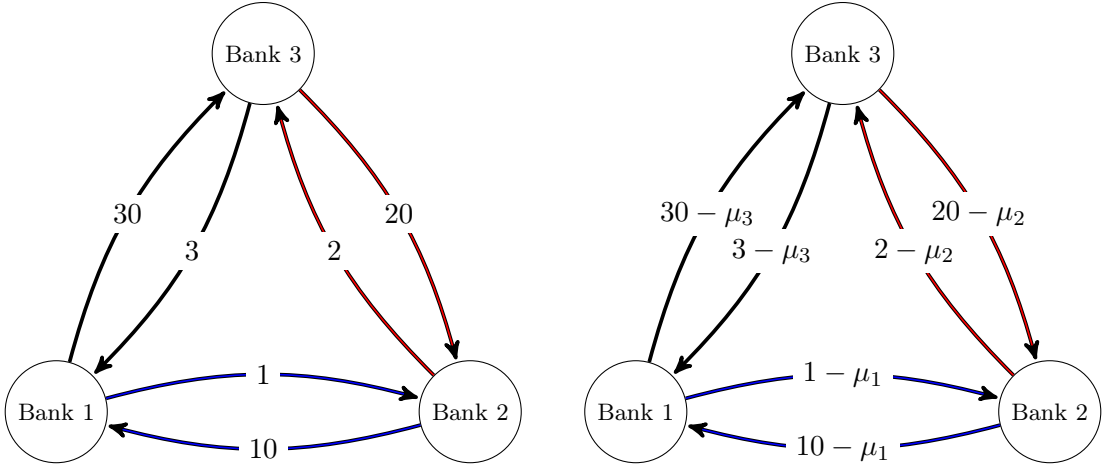

\item {\bf Conservative compression}: Given an initial network $\tilde L \in \lcal$, conservative compression allows for the reduction of cyclical exposures only.  That is, the net obligations owed around a directed cycle $i \to j_1 \to ... \to j_m \to i$ must always be kept consistent with the initial network construction; additionally, all obligations can only be reduced from their initial levels.  As defined in~\cite{d2019compressing}, this cyclical netting rule can be encoded by the fixed net liabilities condition for every bank $i$.  As such, we can define conservative compression $\ccal^C(\tilde L)$ as:
\begin{align*}
\ccal^C(\tilde L) &:= \left\{L \in \lcal \; | \; \forall i,j: \; \sum_{k = 0}^n [L_{ik} - L_{ki}] = \sum_{k = 0}^n [\tilde L_{ik} - \tilde L_{ki}], \; L_{ij} \in [0,\tilde L_{ij}]\right\}.
\end{align*}
Conservative compression is shown in Figure~\ref{fig:ExComp} under $\alpha \in [0,1]$ and $\beta \in [0,10]$.
\item {\bf Rerouting}: Given an initial network $\tilde L \in \lcal$, rerouting allows for the rewiring of the entire network.  That is, all liabilities are redistributed throughout the system in such a way that net and gross liabilities are kept constant. As such, we can define rerouting $\ccal^R(\tilde L)$ as:
\begin{align*}
\ccal^R(\tilde L) &:= \left\{L \in \lcal \; | \; \forall i,j: \; \sum_{k = 0}^n [L_{ik} - L_{ki}] = \sum_{k = 0}^n [\tilde L_{ik} - \tilde L_{ki}], \; \sum_{k = 0}^n L_{ik} = \sum_{k = 0}^n \tilde L_{ik}\right\}.
\end{align*}
Rerouting is shown in Figure~\ref{fig:ExComp} under $\alpha \in (-\infty,1]$, $\beta \in (-\infty,10]$ and $\alpha + \beta = 0$.
\item {\bf Nonconservative compression}: Given an initial network $\tilde L \in \lcal$, nonconservative compression allows for, first, rerouting of the network and, second, the conservative compression of that network.  That is, for every bank $i$, net liabilities are kept constant while gross liabilities are allowed to be reduced from the initial setup.  As such, we can define nonconservative compression $\ccal^N(\tilde L)$ as:
\begin{align*}
\ccal^N(\tilde L) &:= \left\{L \in \lcal \; | \; \forall i,j: \; \sum_{k = 0}^n [L_{ik} - L_{ki}] = \sum_{k = 0}^n [\tilde L_{ik} - \tilde L_{ki}], \; \sum_{k = 0}^n L_{ik} \leq \sum_{k = 0}^n \tilde L_{ik}\right\}.
\end{align*}
\end{enumerate}
Sometimes one may also wish to fix the obligations $L_{i0}$ owed to society; this is accomplished by taking the intersection $\ccal(\tilde L) \cap \ccal^0(\tilde L)$ where
\[\ccal^0(\tilde L) := \{L \in \bbr^{n \times (n+1)}_+ \; | \; \forall i: \; L_{i0} = \tilde L_{i0}\}.\]
We denote the \textbf{nonconservative-0 compression} as $\ccal^N(\tilde L) \cap \ccal^0(\tilde L)$.
Nonconservative compression is shown in Figure~\ref{fig:ExComp} under $\alpha \in (-\infty,1]$, $\beta \in (-\infty,10]$ and $\alpha + \beta \geq 0$.
\end{example}

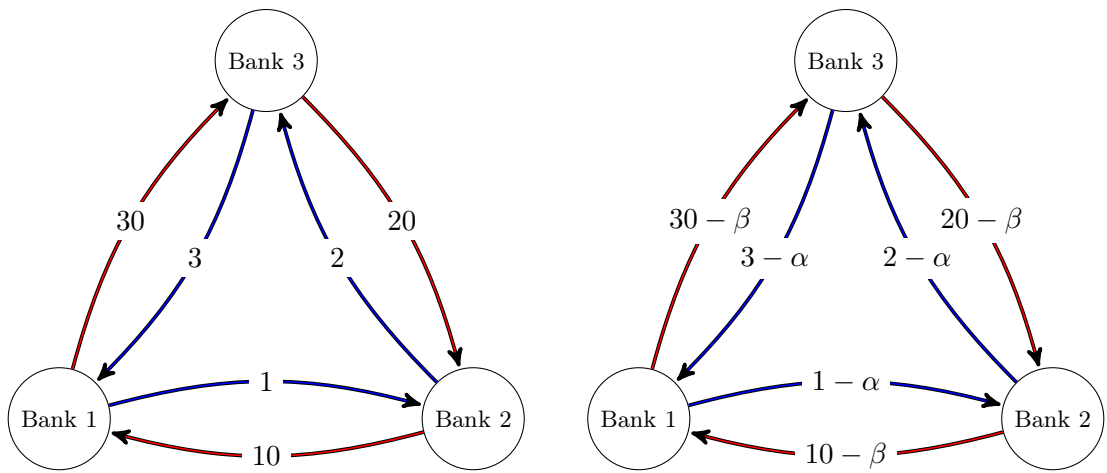
\begin{figure}

\centering
\begin{tikzpicture}[scale=\textwidth/15cm, >=stealth',shorten >=1pt,node distance=4cm,on grid,initial/.style    ={}]

  \node[state] (x1) at (0,0) {\footnotesize Bank 1};
 \node[state] (x2)  at (5,0) (x2) {\footnotesize Bank 2};
 \node[state] (x3)  at (2.5,4.33) {\footnotesize Bank 3};

\tikzset{every node/.style={fill=white}}

      \tikzset{mystyle/.style={->,relative=true,in=165,out=15,double=blue}}
      
        \path[->] (x1) edge [mystyle]   node {1} (x2);
        
         \path[->] (x2) edge [mystyle]   node {2} (x3);
         
            \path[->] (x3) edge [mystyle]   node {3} (x1);

           \tikzset{mystyle/.style={->,relative=true,in=165,out=15,double=red}}
  
            \path[->] (x2) edge [mystyle]   node {10} (x1);
   
      \path[->] (x3) edge [mystyle]   node {20} (x2);

      \path[->] (x1) edge [mystyle]   node {30} (x3);

  \node[state] (x1) at (7,0) {\footnotesize Bank 1};
 \node[state] (x2)  at (12,0) (x2) {\footnotesize Bank 2};
 \node[state] (x3)  at (9.5,4.33) {\footnotesize Bank 3};

\tikzset{every node/.style={fill=white}}

      \tikzset{mystyle/.style={->,relative=true,in=165,out=15,double=blue}}
      
        \path[->] (x1) edge [mystyle]   node {$1-\alpha$} (x2);
        
         \path[->] (x2) edge [mystyle]   node {$2-\alpha$} (x3);
         
            \path[->] (x3) edge [mystyle]   node {$3-\alpha$} (x1);

           \tikzset{mystyle/.style={->,relative=true,in=165,out=15,double=red}}
  
            \path[->] (x2) edge [mystyle]   node {$10-\beta$} (x1);
   
      \path[->] (x3) edge [mystyle]   node {$20-\beta$} (x2);

      \path[->] (x1) edge [mystyle]   node {$30-\beta$} (x3);
            
      \end{tikzpicture}

\caption{Example of cycle compression in different models. For conservative compression: $\alpha\in[0,1]$ and $\beta\in[0,10]$. For rerouting: $\alpha\in(-\infty, 1], \beta \in (-\infty, 10]$ and $\alpha+\beta=0$. For nonconservative compression:  $\alpha\in(-\infty, 1], \beta \in (-\infty, 10]$ and $\alpha+\beta\geq 0$.}
\label{fig:ExComp}
\end{figure}

We conclude this section by showing that the optimal compression problem for the constraint sets of Example~\ref{ex:compression} are NP-hard in general.  
This result motivates us to consider specific settings and algorithms used later in this work.

\begin{theorem}\label{thm:NP}
The optimal network compression problem is NP-hard for the conservative, rerouting and  nonconservative compression models. 
\end{theorem}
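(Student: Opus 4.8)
The plan is to prove NP-hardness by exhibiting, for a single specific (indeed quadratic) choice of objective $f$, a polynomial-time reduction from \emph{Subset Sum}; since the theorem only claims hardness \emph{generically} (i.e.\ for suitable objectives), one such $f$ suffices. Recall a \emph{Subset Sum} instance is a list of positive integers $w_1,\dots,w_m$ with a target $T$, asking whether some $S\subseteq\{1,\dots,m\}$ satisfies $\sum_{i\in S}w_i=T$. Given such an instance I would construct an initial network $\tilde L\in\lcal$ together with $f$ so that the optimal value of \eqref{eq:optimization} equals $0$ exactly when the instance is a yes-instance. Because any polynomial-time method for \eqref{eq:optimization} would then decide \emph{Subset Sum}, NP-hardness follows.

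For the gadget I would attach to each weight $w_i$ a disjoint two-bank cycle: banks $u_i,v_i$ with $\tilde L_{u_iv_i}=\tilde L_{v_iu_i}=w_i$ and no other incident obligations, so each such bank has zero net position. Writing $\mu_i:=w_i-L_{u_iv_i}$ for the amount compressed on cycle $i$, the conservative constraints force $L_{u_iv_i}=L_{v_iu_i}=w_i-\mu_i$ with $\mu_i\in[0,w_i]$, so the feasible set projects exactly onto the box $\prod_{i=1}^m[0,w_i]$ in the variables $\mu=(\mu_i)_i$. I would then take
\[ f(L):=\sum_{i=1}^m \mu_i\,(w_i-\mu_i)\;+\;\kappa\Big(\sum_{i=1}^m \mu_i-T\Big)^2, \]
with $\kappa>0$ fixed; each $\mu_i$ is linear in $L$, so $f$ is a genuine quadratic objective. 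On the box the first sum is nonnegative and vanishes iff every $\mu_i\in\{0,w_i\}$, while the penalty is nonnegative and vanishes iff $\sum_i\mu_i=T$; hence $f\ge0$ throughout and $f(L)=0$ is attainable precisely when some choice $\mu_i\in\{0,w_i\}$ has $\sum_i\mu_i=T$, i.e.\ when the selected index set sums to $T$. This settles the conservative case.

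To cover rerouting and nonconservative compression I would reuse the same weights but harden the gadget so that its feasible projection onto $\mu$ is still the box $\prod_i[0,w_i]$. The difficulty is that these models permit rewiring liabilities to \emph{arbitrary} counterparties and (for nonconservative) lowering gross out-liabilities, so the degrees of freedom are no longer confined to the intended cycles. I would lock them down using the net-position constraints together with auxiliary sink banks and, where needed, the societal-obligation restriction $\ccal(\tilde L)\cap\ccal^0(\tilde L)$, choosing external obligations $\tilde L_{i0}$ and net positions so that any admissible network must keep each $u_i$ paying only into cycle $i$; since $\ccal^R(\tilde L)\subseteq\ccal^N(\tilde L)$ and, for this $\tilde L$, the nonconservative freedom cannot reach outside the box, the same $f$ and the same zero-set characterization apply. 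Alternatively, I would replace the vertex-forcing term $\mu_i(w_i-\mu_i)$ by a variant bounded below on all of $\lcal$ (e.g.\ its truncation to the box) so that $f\ge0$ and the zero set are unchanged even on the larger feasible region.

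I expect the confinement step for rerouting to be the main obstacle: one must simultaneously (i) keep $f$ nonnegative with zero set exactly the subset-sum solutions, (ii) keep $f$ a legitimate, ``generic'' objective, and (iii) make the argument robust to the very different geometries of $\ccal^C$, $\ccal^R$, and $\ccal^N$. The remaining points are routine: the reduction is clearly polynomial in the instance size, the data are integral, and it suffices to decide whether the optimum equals $0$, so only exact optimization of \eqref{eq:optimization} is invoked.
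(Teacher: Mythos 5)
Your conservative-compression reduction is sound: on the disjoint two-cycle gadget the conservative constraints do project exactly onto the box $\prod_i[0,w_i]$, and your quadratic objective has zero set exactly the subset-sum solutions. In fact the same argument covers nonconservative compression directly (for a yes-instance set $L_{u_iv_i}=L_{v_iu_i}=w_i-\mu_i$ with $\mu_i\in\{0,w_i\}$; nets stay zero and gross liabilities only decrease, so the projection onto $\mu$ is again the whole box). But the rerouting case --- which the theorem explicitly covers --- is genuinely broken, and your proposed fixes do not repair it. Under $\ccal^R$ gross out-liabilities must stay \emph{equal} to their initial values, so setting $L_{u_1v_1}=0$ forces $u_1$ to redirect its obligation elsewhere, and the other banks' net and gross constraints cap what they can absorb. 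Concrete counterexample: $w=(10,1)$, $T=10$, a yes-instance. Attaining $f=0$ requires $\mu=(10,0)$, i.e.\ $L_{u_1v_1}=0$ and $L_{u_2v_2}=1$. Then $u_2$'s gross is exhausted by $L_{u_2v_2}$, $v_2$'s in-flow is already filled by $u_2$, and no mass can leak to society (all nets are zero, so total interbank in-flow must equal total out-flow); hence $u_1$ can place at most $1$ of its required $10$ of out-flow, no feasible rerouted network attains $f=0$, and your reduction answers ``no'' on a yes-instance. The two patches you sketch do not address this: truncating the objective is irrelevant (under $\ccal^R$ one automatically has $\mu_i\in[0,w_i]$; the failure is completeness, not unboundedness below), and intersecting with $\ccal^0$ proves hardness for a different feasible set, which does not imply hardness for $\ccal^R$ itself.

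The paper's proof avoids this with a structurally different gadget that also yields a stronger statement. It fixes a single, instance-independent objective --- the minimum relative-liability entropy $f(L)=-\sum_{i,j}\pi_{ij}\log\pi_{ij}$ --- and encodes the entire Subset Sum instance, including the target $\theta$, in the initial network: a bipartite core-periphery graph in which periphery node $P_i$ owes $\alpha k_i$ to core $C_1$ and $(1-\alpha)k_i$ to core $C_2$, with $\alpha=\theta/K$. Because the cores have zero gross out-liabilities and the peripheries must have zero in-flow, the net-position constraints alone force every feasible network (under rerouting, and identically under nonconservative compression, whose inequality collapses to equality on this gadget) to take the form $L_{P_iC_1}=x_ik_i$, $L_{P_iC_2}=(1-x_i)k_i$ with $\sum_i x_ik_i=\theta$; the entropy then vanishes iff every $x_i\in\{0,1\}$. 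In other words, the target lives in the hub nodes' net positions --- precisely the confinement mechanism your two-cycle gadget lacks --- and this is also what lets the paper keep the objective fixed in advance (so hardness holds even when $f$ is not part of the input), rather than instance-dependent as in your penalty term $\kappa\bigl(\sum_i\mu_i-T\bigr)^2$. A separate three-core gadget handles the conservative model. To salvage your approach for rerouting you would need to re-encode $T$ into net positions in this spirit rather than into $f$.
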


The proof of this theorem is provided in Appendix~\ref{sec:proofNP}. By considering the network (relative liabilities) entropy ($f(L) = -\sum_{i = 1}^n \sum_{j = 0}^n \frac{L_{ij}}{\sum_{k = 0}^n L_{ik}}\log\left(\frac{L_{ij}}{\sum_{k = 0}^n L_{ik}}\right)$)  as our objective function we show that the optimal compression problem for each set of constraints is NP-hard.  We prove this by performing reduction from instances of  the NP-complete \textsc{subset sum} problem~\cite{karp1972reducibility}; defined by a set of positive integers $S=\{k_1, k_2, \dots, k_n\}$ and an integer target value $\theta \in \N$, we wish to know whether there exist a subset of these integers that sums up to $\theta$. We show that this can be viewed as a special case for the optimal network compression  for each set of constraints. 

\begin{remark}\label{rem:entropy}
In contrast to using the maximum entropy to find the missing liabilities as in~\cite{M11}, we can consider the minimum entropy as our objective function for compression, see e.g.~\cite{kovavcevic2012hardness, watanabe1981pattern}. Indeed, the network  maximizing the entropy will be close to the complete regular network. On the other hand, the network minimizing the entropy would correspond to a sparse network. So it makes sense to consider it as an objective function for compression; as this is shown in~\cite{watanabe1981pattern}, many of the known algorithms in pattern recognition can be characterized as efforts to minimize the entropy. Consider for example the case of similar firms all having the same total assets and liabilities (and without obligations to society). Then it is easy to check that the network maximizing entropy will correspond to the complete regular network with $\frac{L_{ij}}{\sum_{k = 1}^n L_{ik}}=\frac{1}{n-1}$ which gives $f(L) = n \log(n-1)$. On the other hand, the network minimizing the entropy would correspond to the (regular) ring which corresponds to $f(L)=0$.
\end{remark}

\begin{remark}\label{rem:genetic}
As the optimal compression problem~\eqref{eq:optimization} is generically nonconvex and NP-hard, we cannot rely on a gradient descent method to converge to the global optimum.  As such we believe that machine learning tools and methods would be best for solving such problems in general.  For this paper, as will be utilized and validated in Section~\ref{sec:casestudy}, we will implement a genetic algorithm to solve the optimal compression problem; an overview of the genetic algorithm and its implementation is provided in Appendix~\ref{sec:gp}.\footnote{Briefly, all implementations of the genetic algorithm are completed using the ``ga'' function in the Global Optimization Toolbox of MATLAB.} We direct the reader to Figures~\ref{fig:acemoglu-opt-reroute} and~\ref{fig:acemoglu-opt-compress} for the validation of the genetic algorithm for the optimal compression and rerouting problems in a simple three bank setting.
\end{remark}

\section{Systemic risk objective}\label{sec:objective}
In this section we wish to give a specific structure to the objective function $f: \lcal \to \bbr$ in our optimal compression problem~\eqref{eq:optimization}.  Specifically, we wish to consider the network compression that minimizes a \emph{systemic risk measure}.  These functions are decomposed as $\rho \circ \Lambda$ for a risk measure $\rho$ and an aggregation function $\Lambda$.  Such functions were first introduced in~\cite{chen2013axiomatic,kromer2013systemic} and are detailed below; these mappings also coincide with the ``insensitive systemic risk measures'' of~\cite{feinstein2014measures,AR16}. All results presented within this section are provided solely to provide background information on these systemic risk measures; we believe these mappings provide an important class of objective functions for the optimal compression problem.  These systemic risk measures are utilized within Sections~\ref{sec:maximal} and~\ref{sec:casestudy} below.  In order to present this setting, and for the remainder of this paper, we fix some probability space $(\Omega,\fcal,\bbp)$.  Let $\LL^2 := \LL^2(\Omega,\fcal,\bbp)$ denote those random variables that are square-integrable.

In order to determine the health of a financial network, we first present a generic aggregation function $\Lambda$ in the following definition.  These aggregation functions are mappings of two arguments: the endowment for the banks and the liability network.  The purpose of such a function is to provide an aggregate statistic of the state of the financial system.
\begin{definition}\label{defn:metric}
The mapping $\Lambda: \bbr^n_+ \times \lcal \to \bbr$ is a \textbf{aggregation function} if it is a nondecreasing mapping in its first argument.
\end{definition}
\begin{example}\label{ex:agg}
Throughout this work we specifically consider four different aggregation functions that are all fundamentally associated with the clearing mechanisms of~\cite{EN01,RV13} to incorporate initial margins and collateral similar to that presented in, e.g.,~\cite{ghamami2021collateralized,veraart2019does}.   
That is, for endowments $x \in \bbr^n_+$, margin $\mu \in [0,1]$, and recovery rates $\alpha_x,\alpha_L \in [0,1]$, the clearing payments are the \emph{maximal} fixed point $p(x,L) = \FIX_{p \in [0,L\vec{1}]} \Psi(p;x,L)$ for 
\begin{align*}
\Psi_i(p;x,L) &= \begin{cases} \sum\limits_{j = 0}^n L_{ij} &\text{if } x_i + \sum\limits_{j = 1}^n \frac{L_{ji}}{\sum_{k = 0}^n L_{jk}} p_j \geq \sum\limits_{j = 0}^n L_{ij} \\ \left(\sum\limits_{j = 0}^n L_{ij}\right) \wedge \left(\mu\sum\limits_{j = 0}^n L_{ij} + \alpha_x x_i + \alpha_L \sum\limits_{j = 1}^n \frac{L_{ji}}{\sum_{k = 0}^n L_{jk}} p_j\right) &\text{if } x_i + \sum\limits_{j = 1}^n \frac{L_{ji}}{\sum_{k = 0}^n L_{jk}} p_j < \sum\limits_{j = 0}^n L_{ij}. \end{cases}
\end{align*}
As such, the clearing procedure $\Psi$ implies: if bank $i$ has nonnegative wealth $x_i + \sum_{j = 1}^n \frac{L_{ji}}{\sum_{k = 0}^n L_{jk}} p_j - \sum_{j = 0}^n L_{ij} \geq 0$,  then it is solvent and its wealth is equal to its total assets minus its total liabilities; if bank $i$ has negative wealth $x_i + \sum_{j = 1}^n \frac{L_{ji}}{\sum_{k = 0}^n L_{jk}} p_j - \sum_{j = 0}^n L_{ij} < 0$ then it is defaulting and its assets are reduced by the recovery rates $\alpha_x,\alpha_L$  while the collateral is used to cover $100\times\mu\%$ of the liabilities. Due to the use of the collateral, a bank may be able to pay its obligations in full even while defaulting.  We wish to note that we recover the clearing mechanisms of~\cite{EN01,RV13} with this structure when the margin is removed, i.e., $\mu = 0$.  
\begin{proposition}\label{prop:payments}
Fix the margin $\mu \in [0,1]$ and recovery rates $\alpha_x,\alpha_L \in [0,1]$.  There exists a greatest and least clearing solution to $p = \Psi(p;x,L)$ within the lattice $[0 , L\vec{1}]$ for any endowments $x \in \bbr^n_+$ and liability network $L \in \lcal$.
\end{proposition}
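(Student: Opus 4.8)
The plan is to invoke the Knaster--Tarski fixed point theorem, which is the natural tool here because $\Psi$ is discontinuous (it has a jump at each bank's solvency threshold), ruling out Brouwer- or contraction-based arguments, whereas Tarski requires only order-preservation. Recall that the theorem guarantees that any order-preserving self-map of a complete lattice has a complete lattice of fixed points, and in particular a greatest and a least one. The relevant order is the componentwise order on $\bbr^n$, under which the box $[0,L\vec{1}] = \prod_{i=1}^n [0,\sum_{j=0}^n L_{ij}]$ is a complete lattice: suprema and infima of arbitrary subsets are taken coordinatewise and remain in the box by closedness of the intervals. It therefore suffices to verify two things, namely that $\Psi(\cdot\,;x,L)$ maps $[0,L\vec{1}]$ into itself and that it is nondecreasing on this box.

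For the self-map property I would fix $p \in [0,L\vec{1}]$ and a bank $i$ and check each branch. On the solvent branch, $\Psi_i(p) = \mu\sum_{j=0}^n L_{ij} + (1-\mu)\sum_{j=0}^n L_{ij} = \sum_{j=0}^n L_{ij}$, exactly the upper endpoint. On the defaulting branch, $\Psi_i(p) = \mu\sum_{j=0}^n L_{ij} + \alpha_x x_i + \alpha_L\sum_{j=1}^n \frac{L_{ji}}{\sum_{k=0}^n L_{jk}}p_j$ is a sum of nonnegative terms, so $\Psi_i(p) \geq 0$; and since $\alpha_x,\alpha_L \in [0,1]$, the defaulting condition $x_i + \sum_{j=1}^n \frac{L_{ji}}{\sum_{k=0}^n L_{jk}}p_j < (1-\mu)\sum_{j=0}^n L_{ij}$ forces $\alpha_x x_i + \alpha_L\sum_{j=1}^n \frac{L_{ji}}{\sum_{k=0}^n L_{jk}}p_j < (1-\mu)\sum_{j=0}^n L_{ij}$, whence $\Psi_i(p) < \sum_{j=0}^n L_{ij}$. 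Thus $\Psi_i(p)$ lands in $[0,\sum_{j=0}^n L_{ij}]$ either way. I would also record the standard convention $L_{ji}/\sum_{k=0}^n L_{jk} := 0$ when bank $j$ has no outgoing obligations, so the relative liabilities are always well defined.

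The crux is monotonicity, and here I would isolate the one genuinely delicate step. Write $a_i(p) := x_i + \sum_{j=1}^n \frac{L_{ji}}{\sum_{k=0}^n L_{jk}}p_j$ for the interbank assets of $i$ and $\theta_i := (1-\mu)\sum_{j=0}^n L_{ij}$ for the solvency threshold; since all coefficients are nonnegative, $p \mapsto a_i(p)$ is nondecreasing, so $p \leq q$ gives $a_i(p) \leq a_i(q)$. I would then split into cases on the branch of $p$ and of $q$. If both are solvent, $\Psi_i(p) = \Psi_i(q) = \sum_{j=0}^n L_{ij}$; if both are defaulting, $\Psi_i(p) \leq \Psi_i(q)$ is immediate from monotonicity of $a_i$ together with $\alpha_L \geq 0$. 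The configuration ``$p$ solvent, $q$ defaulting'' is impossible since $a_i(p) \geq \theta_i$ and $a_i(p) \leq a_i(q)$ force $a_i(q) \geq \theta_i$. The remaining case, $p$ defaulting while $q$ is solvent, is exactly the regime switch and is the main obstacle: here $\Psi_i(q) = \sum_{j=0}^n L_{ij}$, while the strict defaulting bound established above yields $\Psi_i(p) < \sum_{j=0}^n L_{ij} = \Psi_i(q)$. The essential observation is that, although $\Psi_i$ jumps \emph{upward} at $\theta_i$, the defaulting payoff is strictly dominated by the solvent payoff $\sum_{j=0}^n L_{ij}$, so the discontinuity preserves rather than breaks monotonicity. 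With self-map and monotonicity in hand, Knaster--Tarski produces the greatest and least clearing solutions in $[0,L\vec{1}]$, which finishes the argument.
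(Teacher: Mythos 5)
Your proposal is correct and follows essentially the same route as the paper, which likewise applies Tarski's fixed point theorem to the nondecreasing map $p \in [0,L\vec{1}] \mapsto \Psi(p;x,L)$ on the complete lattice $[0,L\vec{1}]$. The only difference is that you spell out the self-map property and the case analysis for monotonicity (including the regime-switch case), which the paper dismisses as trivial.
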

\begin{proof}
As $p \in [0,L\vec{1}] \mapsto \Psi(p;x,L)$ is nondecreasing, the result follows trivially from Tarski's fixed point theorem.
\end{proof}


With this clearing procedure, we consider the following four aggregation functions:
\begin{itemize}
\item \textbf{Number of banks paying in full}: $\Lambda^\#(x,L) := \sum_{i = 1}^n \ind{p_i(x,L) \geq \sum_{j = 0}^n L_{ij}}$.
\item \textbf{Number of solvent banks}: $\Lambda^{\scal}(x,L) := \sum_{i = 1}^n \ind{x_i + \sum_{j = 1}^n\frac{L_{ji}}{\sum_{k = 0}^n L_{jk}}p_j(x,L) \geq \sum_{j = 0}^n L_{ij}}$.
\item \textbf{System-wide wealth} : $\Lambda^\ncal(x,L) := \sum_{i = 1}^n \left[x_i + \sum_j \frac{L_{ji}}{\sum_{k = 0}^n L_{jk}} p_j(x,L) - \sum_{j = 0}^n L_{ij}\right]$. 
\item \textbf{External wealth}: $\Lambda^{0}(x,L) := \sum_{i = 1}^n \frac{L_{i0}}{\sum_{j = 0}^n L_{ij}} p_i(x,L)$.
\end{itemize}
\end{example}

Additionally, we need to consider a risk measure $\rho$ in order to determine the risk that the system is incurring.  Such functions map random variables into capital requirements.
\begin{definition}\label{defn:riskmsr}
The mapping $\rho: \LL^2 \to \bbr$ is a \textbf{risk measure} if it satisfies the following properties:
\begin{itemize}
\item \textbf{normalization}: $\rho(0) = 0$;
\item \textbf{monotonicity}: $\rho(X) \leq \rho(Y)$ if $X \geq Y$ a.s.; and
\item \textbf{translative}: $\rho(X+m) = \rho(X) - m$ for $m \in \bbr$.
\end{itemize}
\end{definition}
These risk measures may satisfy additional conditions, e.g., convexity or positive homogeneity.  
\begin{example}\label{ex:riskmsr}
For the purposes of this work, we will focus on two standard risk measures parameterized by $\gamma \in [0,1]$:
\begin{itemize}
\item \textbf{Value-at-Risk}: $\rho^{\rm VaR}_\gamma(Z) = -\inf\{z \in \bbr \; | \; \bbp(Z \leq z) > 1-\gamma\} =: -Z_{1-\gamma}$.  If $\gamma = 1$ then we recover the so-called \emph{worst-case risk measure}: $\rho^{\rm VaR}_1(Z) = -\essinf Z =: \rho^{\rm WC}(Z)$.
\item \textbf{Expected shortfall}: $\rho^{\rm ES}_\gamma(Z) = -\E[Z | Z \leq Z_{1-\gamma}]$.  If $\gamma = 0$ then we recover the so-called \emph{expectation risk measure}: $\rho^{\rm ES}_0(Z) = -\E[Z] =: \rho^\E(Z)$.
\end{itemize}
\end{example}

To measure the health of the financial system, we consider the systemic risk measures $\rho \circ \Lambda$ for risk measure $\rho$ and aggregation function $\Lambda$ as presented in, e.g.,~\cite{chen2013axiomatic}.
\begin{definition}\label{defn:systemic}
The mapping $R: (\LL^2)^n \times \lcal \to \bbr$ is a \textbf{systemic risk measure} if it can be decomposed into an aggregation function $\Lambda$ and a risk measure $\rho$ so that $R(X,L) := \rho(\Lambda(X,L))$ for every $X \in (\LL^2)^n$ and $L \in \lcal$.
\end{definition}
Within the optimal compression problem~\eqref{eq:optimization}, we specifically are interested in 
\[f(L) := \rho(\Lambda(\xcal_L,L))\]
for some (random) endowments $\xcal: \lcal \to (\LL^2)^n$.  That is, given a stress scenario $\xcal: \lcal \to (\LL^2)^n$, we seek to find the optimal financial network (subject to compression constraints) such that the systemic risk is minimized. In Appendix~\ref{sec:systematic}, a semi-analytic form for these systemic risk measures along the lines as~\cite{BF18comonotonic} is provided under systematic stress scenarios.

\begin{remark}\label{rem:acemoglu}
We wish to highlight two special cases which relate to the notions proposed in~\cite{AOT15} in the uncollateralized $\mu = 0$ setting.  
(In this uncollateralized setting, the number of solvent firms coincides exactly with the number of banks paying in full, i.e., $\Lambda^\# = \Lambda^\scal$.)
Fix $X \in (\LL^2)^n$ such that $X_i \geq 0$ a.s.\ for every bank $i$. In the context of the above general setting, we define the stress scenarios by the random field $\xcal_L \equiv X$ for any $L \in \lcal$.
\begin{enumerate}
\item Consider the worst-case risk measure $\rho^{\rm WC}(Z) = -\essinf Z$.  In the notation from \cite{AOT15}, $L$ is more \textbf{resilient} than $\hat L$ if and only if $$\rho^{\rm WC}(\Lambda^\#(X,L)) \leq \rho^{\rm WC}(\Lambda^\#(X,\hat L)).$$
\item Consider the expectation risk measure $\rho^\E(Z) = -\E[Z]$.  In the notation from \cite{AOT15}, $L$ is more \textbf{stable} than $\hat L$ if and only if $$\rho^\E(\Lambda^\#(X,L)) \leq \rho^\E(\Lambda^\#(X,\hat L)).$$
\end{enumerate}
\cite{AOT15} presents these notions for symmetric systems of banks with i.i.d.\ Bernoulli shocks $X_i$.  Notably, under such conditions, the number of solvent banks provides the full information on the health of the system; that is, any aggregation function that depends on $(x,L)$ only through the clearing payments $p(x,L)$, say $\Lambda$, provides the same ordering of liability networks $L,\hat L \in \lcal$ as $\Lambda^\#$: 
\[\rho(\Lambda(X,L)) \leq \rho(\Lambda(X,\hat L)) \; \Leftrightarrow \; \rho(\Lambda^\#(X,L)) \leq \rho(\Lambda^\#(X,L)).\]  
We will revisit these problems, and compare our formulation with that with~\cite{AOT15} further, in Section~\ref{sec:acemoglu} below.
\end{remark}

\section{Optimality of maximal compression}\label{sec:maximal}
Recall the maximal compression problem as defined in~\cite{d2019compressing} and detailed in Remark~\ref{rem:max-compress}, i.e., the optimal network compression with $f(L) := \sum_{i = 1}^n \sum_{j = 0}^n L_{ij}$. Herein we will focus entirely on compression without consideration for the rerouting problem. Within this section, our goal is to study the (sub)optimality of the maximally compressed network when studied through the lens of systemic risk measure objectives (i.e., $f(L) := R(\xcal_L,L) := \rho(\Lambda(\xcal_L,L))$). Throughout this section, we will denote the maximally compressed network by $L^{\max} := \argmin\{\sum_{i = 1}^n \sum_{j = 0}^n L_{ij} \; | \; L \in \ccal(\tilde L)\}$ for generic convex compression constraints $\ccal(\tilde L)$.  As stated in Remark~\ref{rem:max-compress}, $L^{\max}$ can be computed as the solution to a linear program under any of our proposed compression problems provided within Example~\ref{ex:compression}.
\begin{remark}\label{rem:init-max}
Consider the systemic risk objective $f(L) := R(\xcal_L,L)$ presented above.  If the only concern is the relative risk of the original network to the maximally compressed network, this can be compared directly by studying $R(\xcal_{\tilde L},{\tilde L})$ and $R(\xcal_{L^{\max}},L^{\max})$.  Specifically, if $R(\xcal_{L^{\max}},L^{\max}) > R(\xcal_{\tilde L},{\tilde L})$ then maximal compression is worse than no compression and, as a direct consequence, must also be suboptimal for the optimal network compression problem~\eqref{eq:optimization}.
\end{remark}
\begin{proposition}\label{prop:derivative}
Consider a financial network $\tilde L \in \lcal$ with maximally compressed version $L^{\max} \in \lcal$ under compression constraints $\ccal(\tilde L)$.
Additionally, consider the optimal network compression problem over systemic risk measure $R: (\LL^2)^n \times \lcal \to \bbr$ subject to the stress scenario $\xcal: \lcal \to (\LL^2)^n$.  
Assume $L \in \lcal \mapsto R(\xcal_L,L)$ is differentiable.
Denote the directional derivative of $R(\xcal_{\cdot},\cdot)$ at $L^{\max}$ in the direction of the initial liabilities $\tilde L$ by
\[\dcal := 
    \lim_{t \searrow 0} \frac{1}{t}\left[R(\xcal_{t\tilde L + (1-t)L^{\max}},t\tilde L + (1-t)L^{\max}) - R(\xcal_{L^{\max}},L^{\max})\right].\]  
If $\dcal < 0$ then $L^{\max}$ is suboptimal, i.e., there exists some $L^* \in \lcal$ such that $R(\xcal_{L^*},L^*) < R(\xcal_{L^{\max}},L^{\max})$.  
If $L^{\max}$ is the optimally compressed network then $\dcal \geq 0$.
\end{proposition}
\begin{proof}
If $\dcal < 0$ then there exists some $t > 0$ such that $R(\xcal_{t\tilde L + (1-t)L^{\max}},t\tilde L + (1-t)L^{\max}) < R(\xcal_{L^{\max}},L^{\max})$, i.e., $L^{\max}$ is suboptimal by construction.
In contrast, if $L^{\max}$ is the optimally compressed network then, by definition of optimality, the directional derivative of $R(\xcal_{\cdot},\cdot)$ at $L^{\max}$ is nonnegative in every direction; in particular, $\dcal \geq 0$.
\end{proof}
In Remark~\ref{rem:init-max} and Proposition~\ref{prop:derivative}, we propose simple metrics to test the (sub)optimality of the maximally compressed network $L^{\max}$ in general stochastic settings.  We wish to conclude this discussion of the optimality of the maximally compressed network under a specific \emph{systematic} structure for the stress scenario.  (As noted above, a semi-analytic form for the Value-at-Risk and expected shortfall under the aggregation functions of Example~\ref{ex:agg} are provided within Appendix~\ref{sec:systematic}.) In contrast to, e.g.,~\cite{BF18comonotonic}, this result permits a comparison of network structures in the vein of~\cite{AOT15}. As far as the authors are aware, such a result is completely novel in the literature.
\begin{proposition}\label{prop:max}
Consider a financial network $\tilde L \in \lcal$ with maximally compressed version $L^{\max} \in \lcal$ under compression constraints $\ccal(\tilde L)$.
Additionally, consider the optimal network compression problem over systemic risk measure $R: (\LL^2)^n \times \lcal \to \bbr$ with decomposition $R := \rho \circ \Lambda$ such that $\rho$ is a law-invariant risk measure and $\Lambda$ is an aggregation function.  Assume $\Lambda$ is based on the margined clearing mechanism of~\cite{EN01,RV13} as provided within Example~\ref{ex:agg} with $\mu,\alpha_x,\alpha_L \in [0,1]$ such that $\alpha_L \leq 1-\mu$ and $\Lambda(x,L) \geq \Lambda(\bar x,\bar L)$ if a larger fraction of obligations are paid under $(x,L)$ than $(\bar x,\bar L)$, i.e., if $p_i(x,L)/\sum_{j = 0}^n L_{ij} \geq p_i(\bar x,\bar L)/\sum_{j = 0}^n \bar L_{ij}$ for every bank $i$.
Finally, assume a parameterized stress scenario $\xcal_L^q$ for $q \in \LL^2$ such that $\xcal_L^q \geq \xcal_{\bar L}^{\bar q}$ (w.r.t.\ first order stochastic dominance) if $\sum_{j = 0}^n L_{ij} \leq \sum_{j = 0}^n \bar L_{ij}$ and $q \geq \bar q$ (w.r.t.\ first order stochastic dominance).
Then there exists some $q^* \in \LL^2$ such that $L^{\max} \in \argmin\{R(\xcal_L^q,L) \; | \; L \in \ccal(\tilde L)\}$ for every $q \geq q^*$ (w.r.t.\ first order stochastic dominance). 
\end{proposition}
\begin{proof}
First, we wish to note that if $p(\xcal_L^q,L) = L\vec{1}$ a.s.\ then $p(\xcal_{L^{\max}}^q,L^{\max}) = L^{\max}\vec{1}$ as well; this holds because either $\xcal_{L^{\max}}^q \geq \xcal_L^q \geq \sum_{j = 0}^n [L_{ij} - L_{ji}] = \sum_{j = 0}^n [L_{ij}^{\max} - L_{ji}^{\max}]$ (i.e., no bank is defaulting under the maximally compressed network $L^{\max}$) or $\alpha_x \xcal_{L^{\max}}^q - \alpha_L \sum_{j = 0}^n [L_{ij}^{\max} - L_{ji}^{\max}] - (1-\mu-\alpha_L)\sum_{j = 0}^n L_{ij}^{\max} \geq \alpha_x \xcal_L^q - \alpha_L \sum_{j = 0}^n [L_{ij} - L_{ji}] - (1-\mu-\alpha_L)\sum_{j = 0}^n L_{ij} \geq 0$ (i.e., if a bank is defaulting then it must still pay all obligations in full due to the margin requirements).   
Take $$q^* := \inf\{q \in \bbr \; | \; p_i(\xcal_{L^{\max}}^q,L^{\max}) = \sum_{j = 0}^n L_{ij}^{\max} \; a.s.\ \forall i\},$$ then for any $q \geq q^*$ it must hold that $R(\xcal_{L^{\max}}^q,L^{\max}) = \min\{R(\xcal_L^q,L) \; | \; L \in \ccal(\tilde L)\}$ by monotonicity of the aggregation function $\Lambda$.
\end{proof}
Proposition~\ref{prop:max} means that maximal and optimal compression coincide so long as the system is secure enough.  Heuristically, under larger stresses, the maximal compression may rapidly become suboptimal as will be observed in the numerical case studies in Section~\ref{sec:casestudy} below. 
\begin{remark}\label{rem:max-conditions}
Though Proposition~\ref{prop:max} introduces a number of conditions, these are naturally satisfied in many cases.  
Notably, every aggregation function introduced within Example~\ref{ex:agg} satisfies the monotonicity with respect to the fraction of obligations being repayed.
Further, the parametric form for the stress scenarios $\xcal_L^q$ with $q \in \LL^2$ is taken to impose a systematic shock $q$ on the assets of the banks. This special structure is utilized throughout the case studies constructed within Section~\ref{sec:casestudy} below. Assuming $\mu > 0$, the monotonicity of the stress scenarios in the liabilities matrices is because, in the case studies, the margin is returned as liabilities are returned (and the shocks are limited to nonnegative random variables).
\end{remark}

\section{Case studies}\label{sec:casestudy}
In this section we will consider two case studies to demonstrate the results of this work.  As mentioned in Remark~\ref{rem:genetic}, we implement a genetic algorithm to optimize~\eqref{eq:optimization}; an overview of the genetic algorithm is provided in Appendix~\ref{sec:gp}.

First, we will present a small, 3 bank, financial system with heterogeneous (comonotonic) endowments.  This system allows us to easily present analytical results to validate the genetic algorithm we utilize to consider optimal compression and rerouting.  Additionally, this small system allows us to investigate the robust fragility results of \cite{AOT15} in a different setting to determine if those results hold for more general settings than presented in that work. In particular, we show that under systematic shocks, even for a simple heterogeneous three bank system, the robust fragility results of~\cite{AOT15} no longer hold generally.

Second, we calibrate a financial system to the 2011 European Banking Authority stress testing data.  With that system we compare the original network, the fully compressed networks, and optimally compressed networks.  
This framework allows us to numerically generalize~\cite{veraart2019does} to quantify the suboptimality of the full compression as utilized in~\cite{d2019compressing} on a large financial network calibrated to data in a specific example. Namely, we show that if the systemic risk measures are not explicitly part of the objective function of the optimal compression problem, we may end up with suboptimal outcomes or, even worse, increased systemic risk. We also find that the optimal network compression can find significant systemic risk improvements over the original network.

Broadly, these numerical examples demonstrate: (i) the optimal compression problem is computationally feasible; (ii) robust fragility with i.i.d.\ Bernoulli shocks as reported in~\cite{AOT15} does not hold for heterogeneous systems; and (iii) maximal compression can be suboptimal in large ``realistic'' financial networks.  As such, these case studies should be viewed through that lens, i.e.\ as numerical counterexamples to the widely reported robust fragility (Section~\ref{sec:acemoglu}) and the optimality of maximal compression in large, complex financial systems (Section~\ref{sec:EBA}).
With these goals in mind, we want to emphasize that all networks considered herein are fully ``nettable'' in the sense that maximal conservative compression can net out the entire interbank network.  Such a simple network structure was taken for demonstration purposes of the three noted goals only.  Further studies with different network calibrations could be of interest for future works.

\subsection{Three bank system}\label{sec:acemoglu}
In~\cite{AOT15} comparison of completely connected to ring structured networks was undertaken in an uncollateralized setting with fixed (random) stress scenarios.  In that work, these networks were ``symmetric'' in that all banks were identical in assets and liabilities, but with i.i.d.\ Bernoulli shocks.  That work focuses on stability and resilience, i.e., w.r.t.\ $\rho^{\rm WC} \circ \Lambda^\#$ and $\rho^\E \circ \Lambda^\#$, respectively, as provided in Remark~\ref{rem:acemoglu}. Notably,~\cite{AOT15} determines, under i.i.d.\ Bernoulli shocks, that the dense network is more stable if shocks are small, but the sparse network is more stable if the shocks are large.  We will explore this question further with a consideration of a 3 bank system under \emph{systematic} shocks that allows for: a completely connected network and two ring networks. The general network structure is depicted in Figure~\ref{fig:acemoglu-network}.  For the sake of notational simplicity, we will consistently refer to bank $i \in \{1,2,3\}$ as the bank with endowment equal to $x_i \times q$.  Without loss of generality we assume $x_1 \leq x_2 \leq x_3$.  We additionally assume that either $x_1 \neq x_2$, $x_1 \neq x_3$, or $x_2 \neq x_3$; if $x_1 = x_2 = x_3$ then, due to the comonotonic endowments inherent for a systematic shock, all rerouted networks have identical systemic risk. We wish to note that both the net and gross liabilities of each bank are the same in all network setups.  For direct comparison with~\cite{AOT15}, within this section we will focus entirely on the uncollateralized setting (i.e., $\mu = 0$) with $\xcal_L \equiv x \times q$ for any network $L \in \lcal$. Notably, as mentioned in Remark~\ref{rem:acemoglu}, in this uncollateralized setting the number of solvent firms coincides exactly with the number of banks paying in full, i.e., $\Lambda^\# = \Lambda^\scal$; because of this, we will not consider $\Lambda^\scal$ explicitly herein.

\begin{figure}
\centering
\begin{tikzpicture}[scale=\textwidth/10cm, >=stealth',shorten >=1pt,node distance=4cm,on grid,initial/.style    ={}]

  \node[state] (x1) at (0,0) {$x_1q$};
 \node[state] (x2)  at (5,0) (x2) {$x_2q$};
 \node[state] (x3)  at (2.5,4.33) {$x_3q$};
 \node[state] (x0)  at (2.5,1.44) {$x_0$};

       \tikzset{mystyle/.style={->,double=black}} 
\tikzset{every node/.style={fill=white}} 

  \path[->] (x1) edge [mystyle]   node {$y$} (x0);
  
    \path[->] (x2) edge [mystyle]   node {$y$} (x0);
    
      \path[->] (x3) edge [mystyle]   node {$y$} (x0);

      \tikzset{mystyle/.style={->,relative=true,in=165,out=15,double=blue}}
      
        \path[->] (x1) edge [mystyle]   node {$L_{12}$} (x2);
  
    \path[->] (x2) edge [mystyle]   node {$L_{23}$} (x3);
    
      \path[->] (x3) edge [mystyle]   node {$L_{31}$} (x1);

            \tikzset{mystyle/.style={->,relative=true,in=165,out=15,double=red}}
        \path[->] (x2) edge [mystyle]   node {$L_{21}$} (x1);
        
           \path[->] (x1) edge [mystyle]   node {$L_{13}$} (x3);
              \path[->] (x3) edge [mystyle]   node {$L_{32}$} (x2);
      
\end{tikzpicture}
\caption{Section~\ref{sec:acemoglu}: The generic network structure under consideration.}
\label{fig:acemoglu-network}
\end{figure}
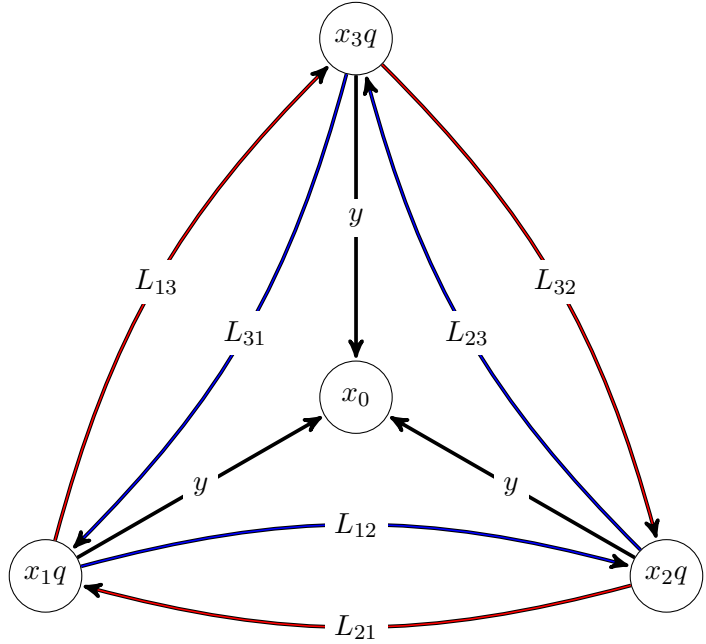

To approach the problem of studying optimal compression and rerouting for our 3 bank system, we first want to study the minimal solvency prices $q^*$ for our three banks.  In order to study these problems we impose the following netting conditions on the obligations:
\[L_{12}+L_{13} = L_{21}+L_{31}, \quad L_{21}+L_{23}=L_{12}+L_{32}, \quad L_{31}+L_{32}=L_{13}+L_{23}.\]  
Given the network provided in Figure~\ref{fig:acemoglu-network} with the aforementioned netting conditions, these values can be computed explicitly as:
\begin{align*}
q_1^* &= \frac{y}{x_1}\\
q_2^* &= \begin{cases} \min\{q_1^* \; , \; \frac{(L_{12}+y)(L_{12}+L_{13}+y) - \alpha_L L_{12}(L_{12}+L_{13})}{\alpha_x L_{12} x_1 + (L_{12}+L_{13}+y)x_2}\} &\text{if } \circled{1}\\
    \min\{q_3^* \; , \; \frac{\bar p_2(\bar p_1 \bar p_3-\alpha_\LL^2 L_{13}L_{31}) - \alpha_L(L_{12}(\alpha_L L_{23}L_{31}+\bar p_3 L_{21}) + L_{32}(\alpha_L L_{21}L_{13}+\bar p_1 L_{23}))}{\alpha_x(\bar p_3 L_{12}+\alpha_L L_{13}L_{32})x_1 + (\bar p_1 \bar p_3 - \alpha_\LL^2 L_{13}L_{31})x_2 + \alpha_x(\bar p_1 L_{32}+\alpha_L L_{31}L_{12})x_3} \} &\text{if } \circled{\cancel{1}} \end{cases}\\
q_3^* &= \begin{cases} \min\{q_2^* \; , \; \frac{\bar p_3(\bar p_1 \bar p_2 - \alpha_\LL^2 L_{12}L_{21}) - \alpha_L(L_{13}(\alpha_L L_{32}L_{21}+\bar p_2 L_{31}) + L_{23}(\alpha_L L_{31}L_{12}+\bar p_1 L_{32}))}{\alpha_x(\bar p_2 L_{13}+\alpha_L L_{12}L_{23})x_1 + \alpha_x(\bar p_1 L_{23}+\alpha_L L_{21}L_{13})x_2 + (\bar p_1 \bar p_2-\alpha_\LL^2 L_{12}L_{21})x_3} \} &\text{if }\circled{1} \\
    \min\{q_1^* \; , \; \frac{(L_{13}+y)(L_{12}+L_{13}+y) - \alpha_L L_{13}(L_{12}+L_{13})}{\alpha_x L_{13} x_1 + (L_{12}+L_{13}+y)x_3}\} &\text{if } \circled{\cancel{1}} \end{cases}
\end{align*}
where $\bar p_1 = L_{12}+L_{13}+y$, $\bar p_2 = L_{21}+L_{23}+y$, $\bar p_3 = L_{31}+L_{32}+y$ and 
\begin{align*}
 \circled{1}:& ([1-\alpha_L][L_{12} + L_{13}] + y)[L_{12} x_3 - L_{13} x_2] + y(L_{12} + L_{13} + y)[x_3 - x_2] + \alpha_x y (L_{13} - L_{12})x_1 \geq 0,\\
\circled{\cancel{1}}:&  ([1-\alpha_L][L_{12} + L_{13}] + y)[L_{12} x_3 - L_{13} x_2] + y(L_{12} + L_{13} + y)[x_3 - x_2] + \alpha_x y (L_{13} - L_{12})x_1 <  0.
\end{align*}
These values $q^*$ allow us to explicitly compute the statistics on the network as discussed in Appendix~\ref{sec:systematic}.

To make these defaulting price levels more explicit, we wish to consider 4 simplified networks with $x_i = i$ and $\bar p_1 = \bar p_2 = \bar p_3 \leq 1$ for all banks.  These networks and resultant $q^*$ are:
\begin{enumerate}
\item \textbf{Completely connected [CC]:} Let $L_{ij} = \frac{1}{2}$ for all $i \neq j \in \{1,2,3\}$ with $y > 0$.
    First we note that, by construction, it must be that $q_1^* \geq q_2^* \geq q_3^*$ in this setup for any choice of bankruptcy costs and obligations to society.  For notation to allow for easier comparisons later on, we will denote these thresholds as $q_1^{CC},q_2^{CC},q_3^{CC}$.
    \begin{align*}
    q_1^{CC} &= y, \\
    q_2^{CC} &= \min\{q_1^{CC} \; , \; \frac{2y^2 + 3y + (1-\alpha_L)}{4y + 4 + \alpha_x}\}, \\
    q_3^{CC} &= \min\{q_2^{CC} \; , \; \frac{2(y^2 + (2-\frac{\alpha_L}{2})y + (1-\alpha_L))}{3(2y + 2 + \alpha_x - \alpha_L}\}. 
    \end{align*}

\item \textbf{Ring 123 [123]:} Let $L_{12} = L_{23} = L_{31} = 1$ and $L_{13} = L_{32} = L_{21} = 0$ with $y > 0$.
    First we note that, by construction, it must be that $q_1^* \geq q_2^* \geq q_3^*$ in this setup for any choice of bankruptcy costs and obligations to society.  For notation to allow for easier comparisons later on, we will denote these thresholds as $q_1^{123},q_2^{123},q_3^{123}$.
    \begin{align*}
    q_1^{123} &= y, \\
    q_2^{123} &= \min\{q_1^{123} \; , \; \frac{y^2 + 2y + (1-\alpha_L)}{2y + 2 + \alpha_x}\}, \\
    q_3^{123} &= \min\{q_2^{123} \; , \; \frac{y^3 + 3y^2 + 3y + (1-\alpha_\LL^2)}{3y^2 + (6 + 2\alpha_x)y + (3 + \alpha_x(2+\alpha_L))}\}.
    \end{align*}

\item \textbf{Ring 132 [132]:} Let $L_{12} = L_{23} = L_{31} = 0$ and $L_{13} = L_{32} = L_{21} = 1$ with $y > 0$.
    For notation to allow for easier comparisons later on, we will denote these thresholds as $q_1^{132},q_2^{132},q_3^{132}$.
    \begin{align*}
    q_1^{132} &= y, \\
    q_2^{132} &= \begin{cases} \frac{y}{2} &\text{if } y \geq \frac{1}{2}\left(1 - \alpha_x + \sqrt{(1-\alpha_x)^2 + 8(1-\alpha_L)}\right) \\ \min\{q_3^{132} \; , \; \frac{y^3 + 3y^2 + 3y + (1-\alpha_\LL^2)}{2y^2 + (4+3\alpha_x)y + (2 + \alpha_x(3+\alpha_L))}\} &\text{if } y < \frac{1}{2}\left(1 - \alpha_x + \sqrt{(1-\alpha_x)^2 + 8(1-\alpha_L)}\right), \end{cases} \\
    q_3^{132} &= \begin{cases} \min\{q_2^{132} \; , \; \frac{y^3 + 3y^2 + 3y + (1-\alpha_\LL^2)}{3y^2 + (6+\alpha_x)y + (3+\alpha_x(1+2\alpha_L))}\} &\text{if } y \geq \frac{1}{2}\left(1 - \alpha_x + \sqrt{(1-\alpha_x)^2 + 8(1-\alpha_L)}\right) \\ \min\{q_1^{132} \; , \; \frac{y^2 + 2y + (1-\alpha_L)}{3y + 3 + \alpha_x}\} &\text{if } y < \frac{1}{2}\left(1 - \alpha_x + \sqrt{(1-\alpha_x)^2 + 8(1-\alpha_L)}\right). \end{cases} 
    \end{align*}

\item \textbf{Compressed [0]:} Let $L_{ij} = 0$ for all $i \neq j \in \{1,2,3\}$ with $y > 0$.
    For notation to allow for easier comparison later on, we will denote these thresholds as $q_1^0,q_2^0,q_3^0$.
    \begin{align*}
    q_1^0 &= y, \qquad q_2^0 = \frac{y}{2}, \qquad q_3^0 = \frac{y}{3}.
    \end{align*}
\end{enumerate}
As $q_1^* = y$ for any network construction in this setup, we will compare these 4 networks for the defaulting thresholds for banks 2 and 3 only.  
In order to ease the notation for the remainder of this case study we will focus solely on the setting in which all three banks have gross obligations $1+y$ and for which $L_{12} = L_{23} = L_{31} =: \lambda$ and $L_{13} = L_{32} = L_{21} =: \xi$. 
Figure~\ref{fig:acemoglu-q*} displays the default thresholds, in excess of the fully compressed system, for the 2nd and 3rd bank ($\max\{q_2^*,q_3^*\}$ and $\min\{q_2^*,q_3^*\}$ respectively) with $\alpha_x = \alpha_L = 0.5$.  First, and notably, the default thresholds are lowest in the fully compressed system.  This is further shown in Figure~\ref{fig:acemoglu-default} in which the optimally compressed network with $\Lambda^\#$ is the fully compressed one.  However, for comparison to~\cite{AOT15}, we also want to investigate the rerouting problem.  By investigating $q^*$ we can compare the stability and resilience of financial networks to \emph{systematic} Bernoulli shocks.  As displayed in Figure~\ref{fig:acemoglu-q*}, and as can be verified analytically (for any $\alpha_x,\alpha_L \in [0,1]$), $q_2^{CC} \leq q_2^{123}$ for any $y \geq 0$.  That is for ``small'' shocks the completely connected system is always more stable and resilient than Ring 123.  For ``large'' shocks with small enough obligations to society $y$, we find that $q_3^{CC} \leq q_3^{123}$; for larger obligations to society the opposite ordering is found.  That is, for ``large'' shocks the total obligations to society can alter the stability and resilience ordering between these two networks.  This, generally, coincides with the robust fragility notion from~\cite{AOT15} in which the completely connected system was more robust to small shocks but more fragile to large shocks.  In contrast, the opposite relations hold between the completely connected network and Ring 132; that is, we find that the ring is more stable and resilient for ``small'' shocks ($q_2^{CC} \geq \max\{q_2^{132},q_3^{132}\}$) but the ordering for ``large'' shocks depends on the obligations to society $y$ (for small enough $y$ then $q_3^{CC} \geq \min\{q_2^{132},q_3^{132}\}$, for large enough $y$ then $q_3^{CC} \leq \min\{q_2^{132},q_3^{132}\}$).  As systematic shocks and heterogeneous financial networks are vital to the consideration of financial stability, optimal compression and rerouting take on new significance since, in this case, the robust fragility results of~\cite{AOT15} no longer hold generally. 
\begin{figure}
\centering
\begin{subfigure}[t]{0.47\textwidth}
\centering
\includegraphics[width=\textwidth]{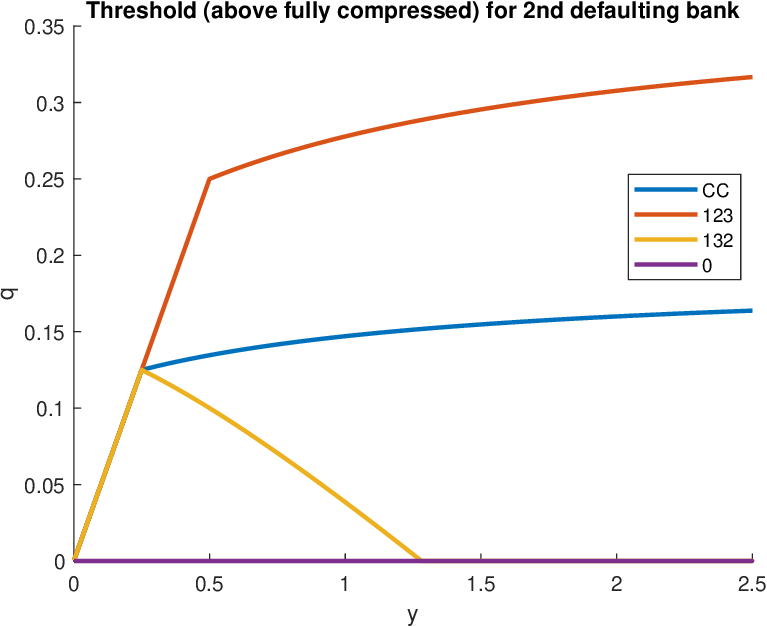}
\caption{The excess (over the fully compressed network) in the default threshold $\max\{q_2^*,q_3^*\} - y/2$ for the second defaulting bank.}
\end{subfigure}
~
\begin{subfigure}[t]{0.47\textwidth}
\centering
\includegraphics[width=\textwidth]{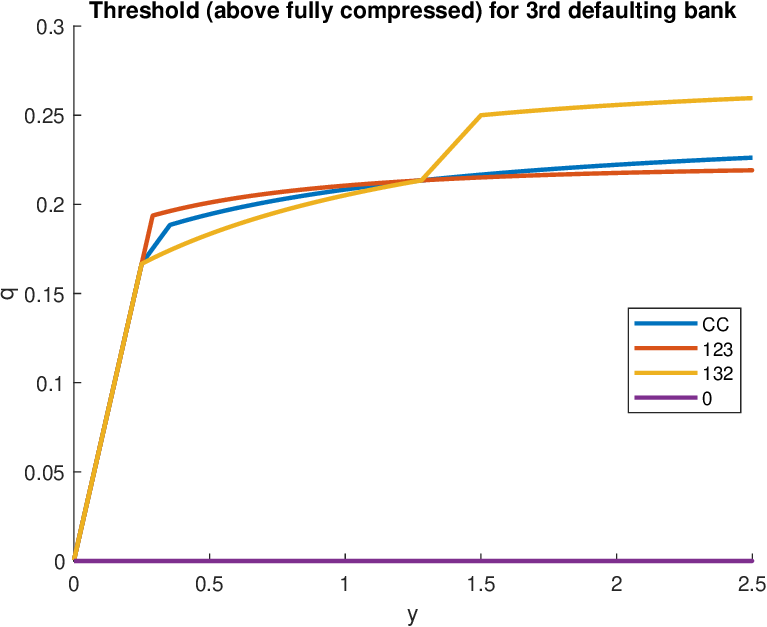}
\caption{The excess (over the fully compressed network) in the default threshold $\min\{q_2^*,q_3^*\} - y/3$ for the third defaulting bank.}
\end{subfigure}
\caption{Section~\ref{sec:acemoglu}: Impact of obligations to society $y$ to the default thresholds $q^*$ for the 4 sample networks: the completed connected system [CC], the two ring networks [123] and [132], and the fully compressed network [0].}
\label{fig:acemoglu-q*}
\end{figure}

\begin{figure}
\centering
\begin{subfigure}[t]{0.47\textwidth}
\centering
\includegraphics[width=\textwidth]{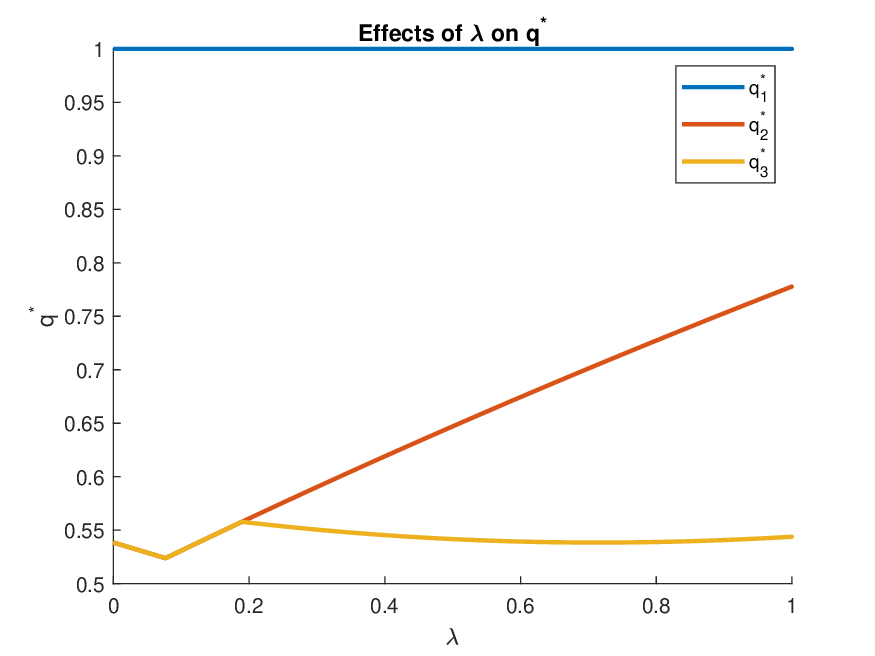}
\caption{Effects of the network topology ($\lambda$) on the default thresholds $q^*$ for rerouting.}
\label{fig:acemoglu-reroute1}
\end{subfigure}
~
\begin{subfigure}[t]{0.47\textwidth}
\centering
\includegraphics[width=\textwidth]{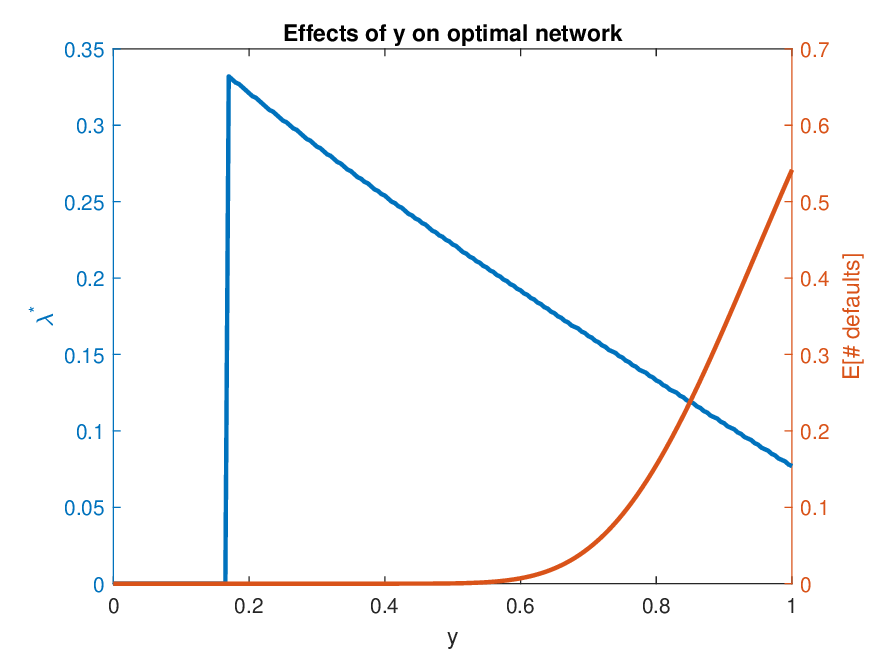}
\caption{Effects of obligations to society ($y$) on the most stable networks (determined by $\lambda^*$) in rerouting.}
\label{fig:acemoglu-reroute2}
\end{subfigure}
\caption{Section~\ref{sec:acemoglu}: Detailed visualization of the optimal rerouting problem.}
\label{fig:acemoglu-reroute}
\end{figure}
We now wish to validate the performance of our genetic algorithm for finding the optimal networks. That is, we wish to demonstrate that the genetic algorithm provides network compressions that equal, or even outperform, those found by other more standard optimization techniques.  We will accomplish this by studying the expectation risk measure $\rho^\E$ (as this was one of the risk measures utilized within~\cite{AOT15}) with all of our sample aggregation functions $\Lambda^\#,\Lambda^\ncal,\Lambda^0$ (recalling that $\Lambda^\scal = \Lambda^\#$ in this uncollateralized setting).  The validation is accomplished by comparing the results of the genetic algorithm with those using an interior point algorithm (initialized at $L_{ij} = \frac{1}{2}$ for all $i \neq j$).\footnote{The interior point algorithm was implemented with ``fmincon'' in MATLAB using the default options.}  We also wish to compare these optimal networks with the 4 sample networks (completely connected, 2 rings, and the fully compressed system) to investigate the optimality of these heuristic constructions.  In Figure~\ref{fig:acemoglu-opt-reroute}, we consider the optimal rerouting problem under change of obligations to society $y$; in Figure~\ref{fig:acemoglu-opt-compress}, we consider the optimal nonconservative compression (with fixed obligations to society) problem under change of obligations to society $y$.  First, and foremost, our genetic algorithm accurately matches or even \emph{outperforms} the optimal network using an interior point algorithm (as seen in optimal nonconservative compression with $\Lambda^0$).  Further, though the heuristic networks coincide with these optimal risk levels in specific cases, they do not uniformly perform as well as the optimal networks.  Most interesting is the consideration of $\Lambda^0$ in which the optimally compressed network nearly coincides with the optimal rerouting problem for low $y$.

\begin{figure}
\begin{subfigure}[t]{0.47\textwidth}
\centering
\includegraphics[width=8cm,height=6cm]{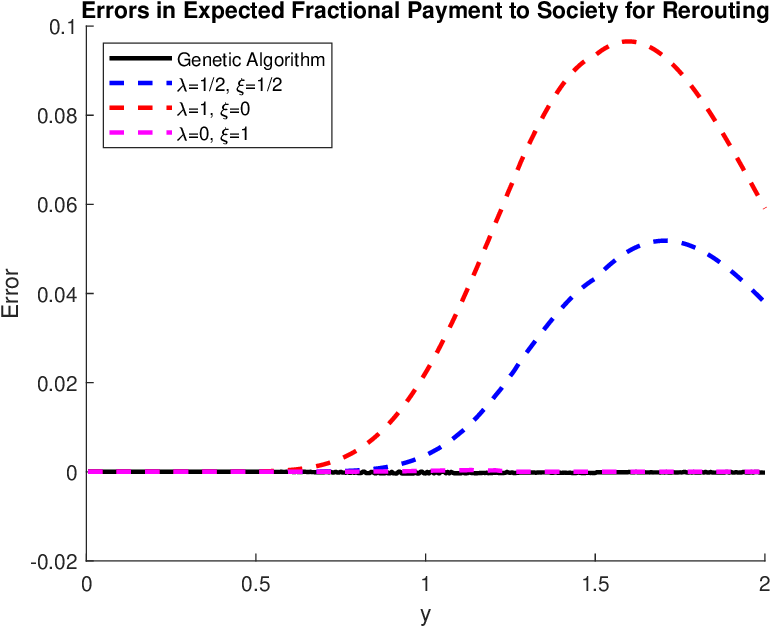}
\caption{Payments to society}
\end{subfigure}
~
\begin{subfigure}[t]{0.47\textwidth}
\centering
\includegraphics[width=8cm,height=6cm]{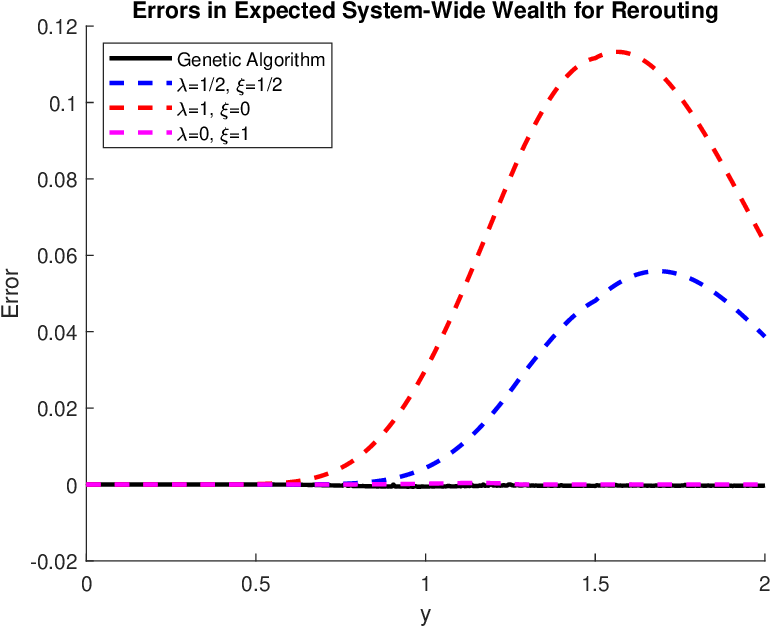}
\caption{Total wealth}
\end{subfigure}

\bigskip
\centering
\begin{subfigure}[t]{0.47\textwidth}
\centering
\includegraphics[width=8cm,height=6cm]{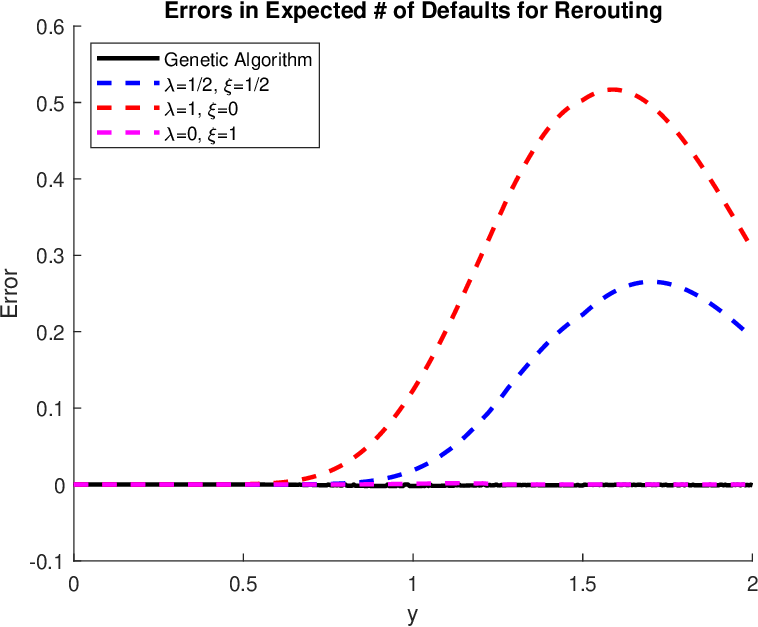}
\caption{Number of defaulting (or solvent) banks}
\end{subfigure}
\caption{Section~\ref{sec:acemoglu}: Validation of the optimal rerouting problem where banks have gross obligations $1+y$ and for which $L_{12} = L_{23} = L_{31} = \lambda$ and $L_{13} = L_{32} = L_{21} = \xi$.}
\label{fig:acemoglu-opt-reroute}
\end{figure}
Consider now the optimal compression and rerouting problems for this three bank system.  As above, throughout this example, we will fix $\alpha_x = \alpha_L = 0.5$ for simplicity of comparison.  First, we will consider the optimal rerouting problem to generalize the notions from~\cite{AOT15}.  In order to ease the notation for the rerouting problem (with gross obligations of $1+y$ for each of the three banks) let $\xi = 1-\lambda$.  First, for the most direct comparison, in Figure~\ref{fig:acemoglu-reroute1} we consider how modifying $\lambda$ affects the defaulting thresholds $q^*$ (with $y = 1$).  By inspection the most stable and resilient system is clearly for some small, but strictly positive, $\lambda$.  These optimally stable networks are described in Figure~\ref{fig:acemoglu-reroute2}; such a system is considered in which $q \sim \mathrm{LogN}(-\frac{\sigma^2}{2},\sigma^2)$ with $\sigma = 20\%$.  Notably, the optimal network depends on the obligations to society $y$.

\begin{figure}
\begin{subfigure}[t]{0.47\textwidth}
\centering
\includegraphics[width=8cm,height=6cm]{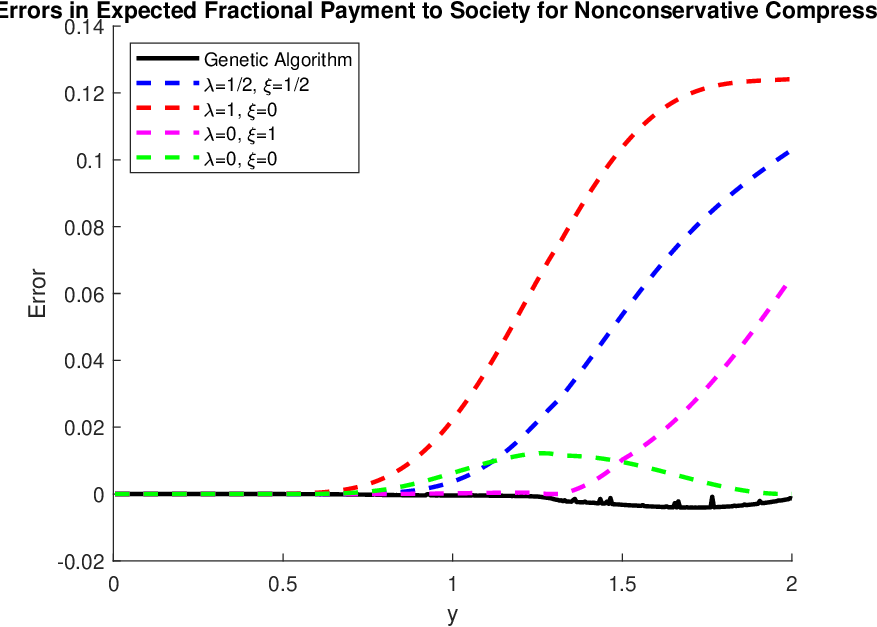}
\caption{Payments to society}
\end{subfigure}
~
\begin{subfigure}[t]{0.47\textwidth}
\centering
\includegraphics[width=8cm,height=6cm]{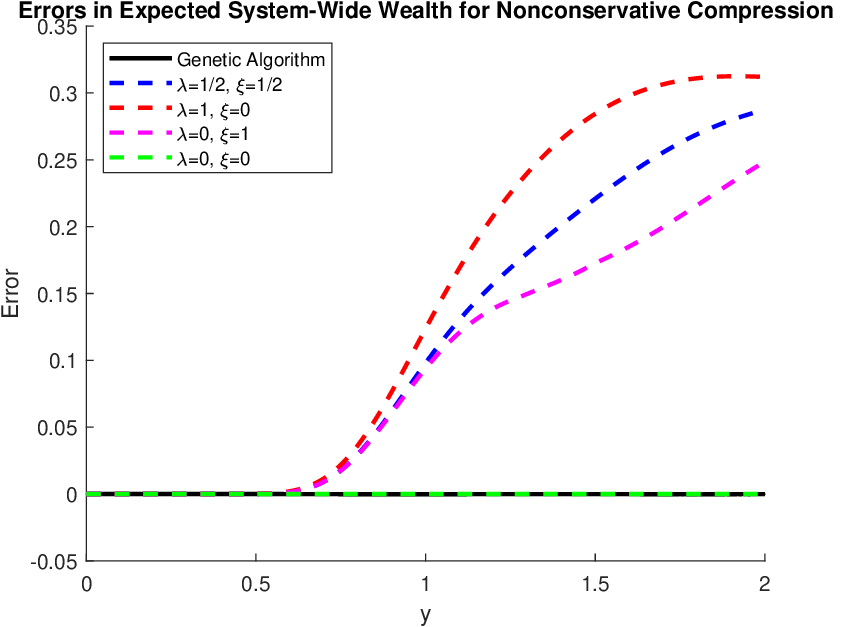}
\caption{Total wealth}
\end{subfigure}
\bigskip

\centering
\begin{subfigure}[t]{0.47\textwidth}
\centering
\includegraphics[width=8cm,height=6cm]{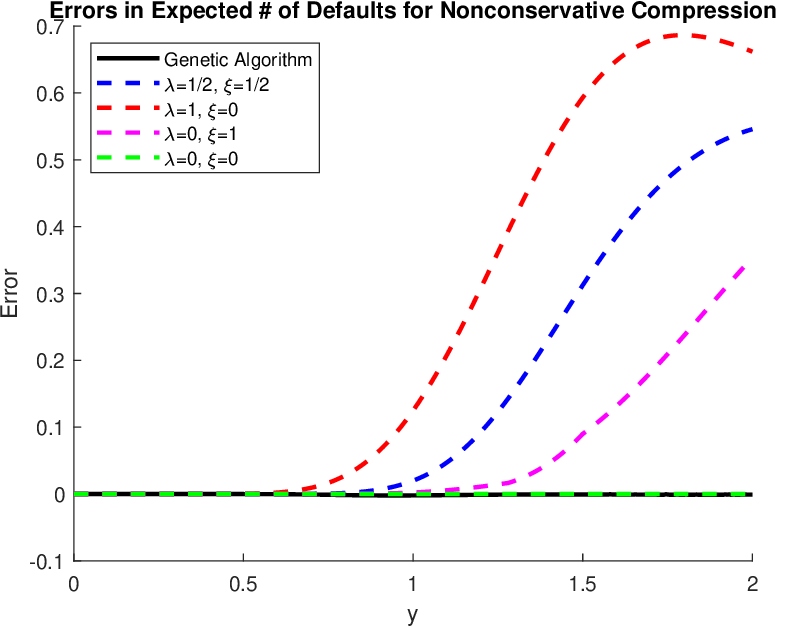}
\caption{Number of defaulting (or solvent) banks}
\label{fig:acemoglu-default}
\end{subfigure}
\caption{Section~\ref{sec:acemoglu}: Validation of the optimal nonconservative compression problem where banks have initial gross obligations $1+y$ and for which $L_{12} = L_{23} = L_{31} = \lambda$ and $L_{13} = L_{32} = L_{21} = \xi$.}
\label{fig:acemoglu-opt-compress}
\end{figure}

\begin{remark}
As presented previously in Remark~\ref{rem:general-stress}, we can consider optimal network compression under general stress scenarios via Monte Carlo simulation.  Intriguingly, and not displayed herein, the optimal network construction and expected risk is generally insensitive to the copula of the stress scenario; this was analyzed numerically over the Gaussian copula with non-negative correlations. 
\end{remark}

\subsection{European banking system}\label{sec:EBA}

In this section we demonstrate how our comonotonic approach with genetic algorithm for optimization can be applied in a larger financial network consisting  of $n = 87$ banks to come from the 2011 European Banking Authority EU-wide stress tests.\footnote{Due to complications with the calibration methodology, we only consider 87 of the 90 institutions. DE029, LU45, and SI058 were not included in this analysis.}  There are several previous empirical studies based on this dataset (see, e.g.,~\cite{CLY14, GV16}) and we calibrate this system by taking the same approach as~\cite{feinstein2017currency}. 
We wish to use this case study for two primary purposes.  First, we use it to demonstrate that our methodology can be applied to large, and realistic, financial systems.  Second, using this calibrated system, we can quantify the suboptimality of the full compression as considered in, e.g.,~\cite{veraart2019does}. Namely, we show that if the systemic risk measures are not explicitly part of the objective function of the optimal compression problem, we end up with suboptimal outcomes or, even worse, an increase in systemic risk. In contrast, we also find that the optimal network compression can find significant systemic risk improvements over the original network.

For the purposes of calibration, we consider a stylized balance sheet for each bank with only three types of assets and liabilities. 
The total \emph{interbank assets} for bank $i$ is $\sum_{j = 1}^n \tilde L_{ji}$ while the total \emph{interbank liabilities} is $\sum_{j = 1}^n \tilde L_{ij}$.  The \emph{external risk-free assets} for bank $i$ is denoted by  $b_i$ and the \emph{external risky assets} are denoted by $s_i$. On the other hand, the \emph{external liabilities} for bank $i$ is $\tilde L_{i0}$ and the bank $i$ is endowed with \emph{capital} $C_i$.

Note that the EBA dataset only provides the total assets $A_i$, capital $C_i$, and interbank liabilities $\sum_{j = 1}^n \tilde L_{ij}$ for each bank $i$. Therefore, we will  make the following simplifying assumptions similar to~\cite{feinstein2017currency,CLY14,GY14}.  
We assume that the external (risky) assets are the difference between the total assets and interbank assets. The external obligations owed to the societal node (denoted by $\tilde L_{i0}$) will be assumed equal to the total liabilities less the interbank liabilities and capital. Further, we assume that the interbank assets are equal to the interbank liabilities for  all banks, i.e., $\sum_{j = 1}^n \tilde L_{ij} = \sum_{j = 1}^n \tilde L_{ji}$ for all $i=1, \dots, n$. 
Under these assumptions,  the remainder of our stylized balance sheet (with collateralization $\mu \in [0,1]$) can be constructed by setting 
\[b_i^{\mu} = 0.8\times(A_i - \sum_{j = 1}^n \tilde L_{ij} - \mu\bar p_i), \quad s_i^{\mu} = 0.2\times(A_i - \sum_{j = 1}^n \tilde L_{ij} - \mu\bar p_i), \]
and,
\[\tilde L_{i0} = A_i - \sum_{j = 1}^n \tilde L_{ij} - C_i, \quad \bar p_i = \tilde L_{i0} + \sum_{j = 1}^n \tilde L_{ij},\]
which will guarantees that firm $i$'s net worth is equal to its capital, i.e., $C_i = A_i - \bar p_i$.  Within this study we consider $\mu \in \{0.0, 0.2, 0.4\}$; we limit $\mu$ in this way so as to guarantee that the assets $b_i^{\mu},s_i^{\mu}$ are positive for every bank $i$.\footnote{The maximum possible collateralization scheme guaranteeing nonnegative assets is $\mu \approx 0.594$.}

We will also need to consider the full nominal liabilities matrix $\tilde L \in \bbr_+^{87 \times 87}$ and not just the total interbank assets and liabilities. To achieve this, we will use the MCMC methodology of~\cite{GV16} which allows for randomized sparse structures to construct the full nominal liabilities matrix consistent with the total interbank assets and liabilities. Under the~\cite{GV16} methodology and the $\sum_{j = 1}^n \tilde L_{ij} = \sum_{j = 1}^n \tilde L_{ji}$ assumption used widely in the literature (see, e.g., \cite{CLY14,GY14}), conservative compression (and therefore also, e.g., nonconservative compression) of the constructed network will result in all interbank obligations being netted out and a zero network remaining due to our initialization of the network. As this example is only for illustrative purposes, we will consider only a single calibration of the interbank network.

The remaining parameters of the system are calibrated as follows. We specify the systematic factor as a lognormal distribution with parameters $q \sim \mathrm{LogN}(r-\frac{\sigma^2}{2},\sigma^2)$ described in millions of euros. Since during the period over which this data was collected, central banks were setting a low interest rate environment, we estimate that the risk-free interest rate is $r = 0$. Finally, from comparisons to annualized historical volatility of European markets in 2011, the volatility of the risky asset is estimated to be $\sigma = 20\%$.

For the purposes of this example, we consider clearing based on the pure collateralized Eisenberg-Noe mechanism, i.e., with full recovery in case of default ($\alpha_x = \alpha_L = 1$).  We will consider two systemic risk measures to optimize over: the 80\% expected shortfall of the payments to society ($\rho^{\rm ES}_{80\%} \circ \Lambda^0$) and the number of banks paying in full ($\rho^{\rm ES}_{80\%} \circ \Lambda^\#$) with stress scenarios $\xcal_{L}^\mu = (b^\mu + 0.8\mu\times[\bar p_i - \sum_{j = 0}^n L_{ij}]) + (s^\mu + 0.2\mu\times[\bar p_i - \sum_{j = 0}^n L_{ij}]) q$ so that any freed up collateral is invested comparably to the original portfolio. We consider expected shortfall herein as it is a coherent risk measure that is widely utilized in practice.  We select these two aggregation functions so as to demonstrate that the conclusions we report below in Tables~\ref{table:EBA-society} and~\ref{table:EBA-solvent} on network compression are not specific to one particular aggregation function but appear more general.  

Herein we compare two types of compression: ``maximally compressed'' corresponds to a compressed network that removes as much excess liabilities from the network as possible (as is considered in~\cite{d2019compressing} and Section~\ref{sec:maximal}) whereas ``optimally compressed'' attempts to minimize the appropriate systemic risk measure.  These two types of compression are then compared over 4 possible compression scenarios: bilateral compression ($\ccal^B(\tilde L)$), conservative compression ($\ccal^C(\tilde L)$), nonconservative compression with fixed obligations to society ($\ccal^N(\tilde L) \cap \ccal^0(\tilde L)$), and nonconservative compression ($\ccal^N(\tilde L)$).  In particular, we wish to compare the systemic risk exhibited by the original network to those found in either the maximally compressed or optimally compressed scenarios.  These results are provided in Tables~\ref{table:EBA-society} and~\ref{table:EBA-solvent}. We wish to remind the reader that the maximally compressed network under conservative, nonconservative-0, and nonconservative compression all result in all interbank obligations being netted out and zero network remaining. Thus, as seen in Tables~\ref{table:EBA-society} and~\ref{table:EBA-solvent}, the gains and losses observed for maximal compression under these three compression algorithms are identical.

As shown in Table~\ref{table:EBA-society}, under maximal compression, the more relaxed the constraints the worse the expected outcome for society in the 20\% tail event; however, by using optimal compression, the systemic risk can be improved significantly under compression.  This is true for all three sampled collateralization levels.  Notably, the optimal bilateral and conservative compression only find marginal improvements in the payments to society, whereas the maximally compressed versions increase systemic risk by billions of euros, in all three collateralization settings.  The nonconservative compressions find substantial benefits, but massive losses when maximally applied without consideration for systemic effects; in the uncollateralized $\mu = 0$ setting, the optimal compression saves over \euro1.6 and \euro5.3 billion, but adds nearly \euro6.75 billion euros of cost when maximally applied.  These uncollateralized networks are visualized in Figure~\ref{fig:networks}; we wish to highlight that optimal bilateral and conservative compression appear quite similar and thus only conservative compression is displayed.  We do not display the maximally compressed networks as, due to the assumptions for the network calibration, the interbank network is zeroed out for conservative, nonconservative-0, and the nonconservative compression.  Finally, we wish to note that though the benefits of compression appear to drop as collateralization increases; this is more than offset by the improvement in systemic risk from the collateralization itself.
\begin{table}
\resizebox{\textwidth}{!}{
\begin{tabular}{|r|r||c|c|c|c|}
\cline{3-6}
\multicolumn{2}{c|}{~} & \textbf{Bilateral} & \textbf{Conservative} & \textbf{Nonconservative-0} & \textbf{Nonconservative}\\ \cline{3-6}\hline
\multirow{2}{*}{\boldmath$\mu=0.0$} & \textbf{Maximally Compressed:} & 2717.449 & 6749.574 & 6749.574 & 6749.574 \\ \cline{2-6}
& \textbf{Optimally Compressed:} & -0.033 & -0.033 & -1639.660 & -5339.976 \\ \hline \multicolumn{6}{c}{}\\[-1em] \hline
%
\multirow{2}{*}{\boldmath$\mu=0.2$} & \textbf{Maximally Compressed:} & 8176.074 & 20408.285 & 20408.285 & 20408.285 \\ \cline{2-6}
& \textbf{Optimally Compressed:} & -0.208 & -0.208 & -1031.956 & -5062.909 \\ \hline \multicolumn{6}{c}{}\\[-1em] \hline
%
\multirow{2}{*}{\boldmath$\mu=0.4$} & \textbf{Maximally Compressed:} & 6421.291 & 16659.836 & 16659.836 & 16659.836 \\ \cline{2-6}
& \textbf{Optimally Compressed:} & -0.026 & -0.026 & -304.968 & -4392.229 \\ \hline 
\end{tabular}
}
\caption{Section~\ref{sec:EBA}: Improvements in $\rho^{\rm ES}_{80\%} \circ \Lambda^0$ from the initial network $\tilde L$ in millions of euros (i.e., negative values indicate cost savings) under different collateralization schemes $\mu$. The expected payments in the 20\% tail under the initial network $\tilde L$ (in millions of euros) are $2.304 \times 10^7$, $2.322 \times 10^7$, and $2.333 \times 10^7$ under $\mu = 0.0$, $0.2$, and $0.4$ respectively against total external obligations of $2.338 \times 10^7$.}
\label{table:EBA-society}
\end{table}
\begin{figure}
\centering
\begin{subfigure}[t]{0.45\textwidth}
\centering
\includegraphics[width=\textwidth]{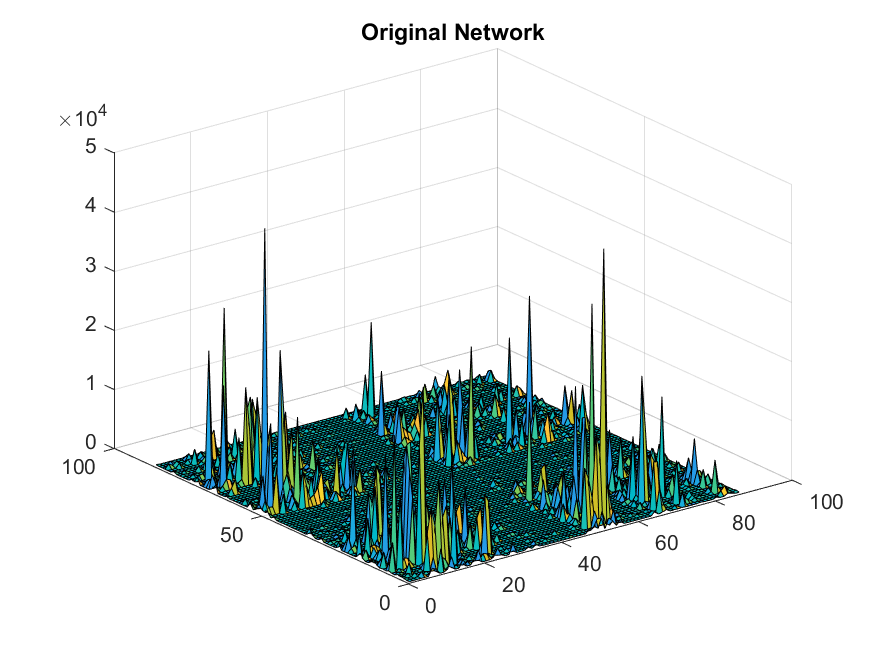}
\caption{Original interbank network}
\end{subfigure}
~
\begin{subfigure}[t]{0.45\textwidth}
\centering
\includegraphics[width=\textwidth]{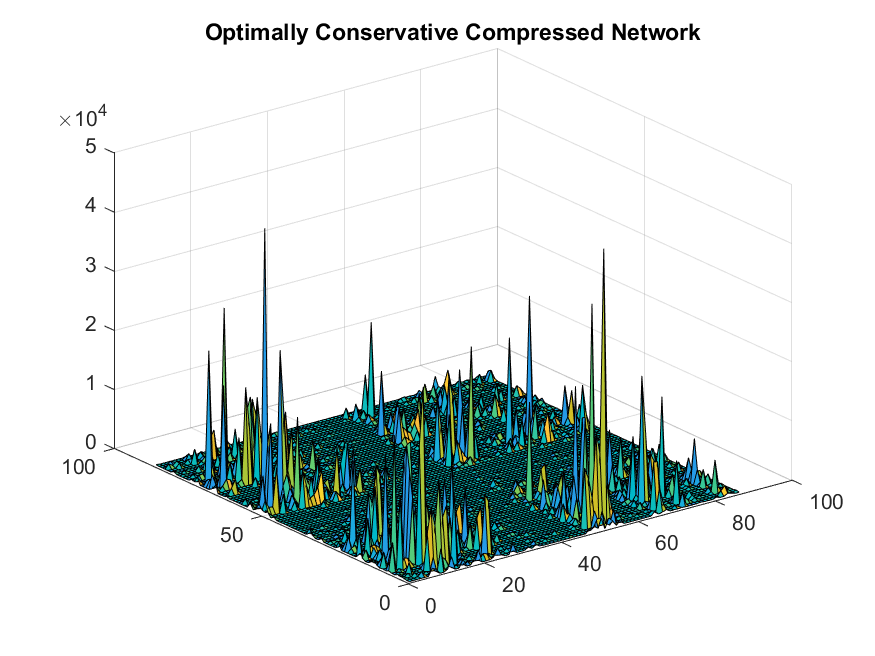}
\caption{Optimal conservative compressed network}
\end{subfigure}
\par\vspace{0.5cm}
\begin{subfigure}[t]{0.45\textwidth}
\centering
\includegraphics[width=\textwidth]{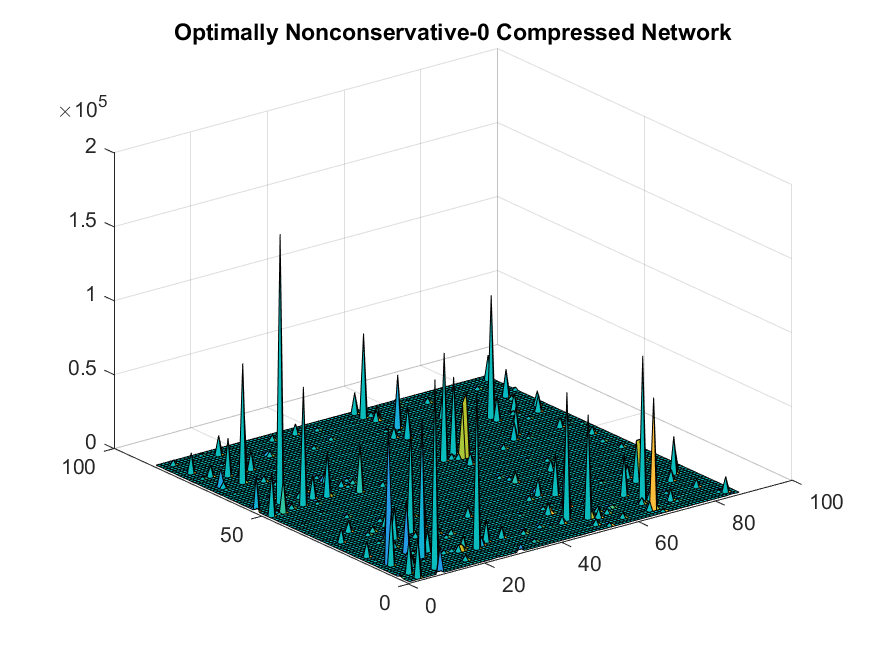}
\caption{Optimal nonconservative-0 compressed network}
\end{subfigure}
~
\begin{subfigure}[t]{0.45\textwidth}
\centering
\includegraphics[width=\textwidth]{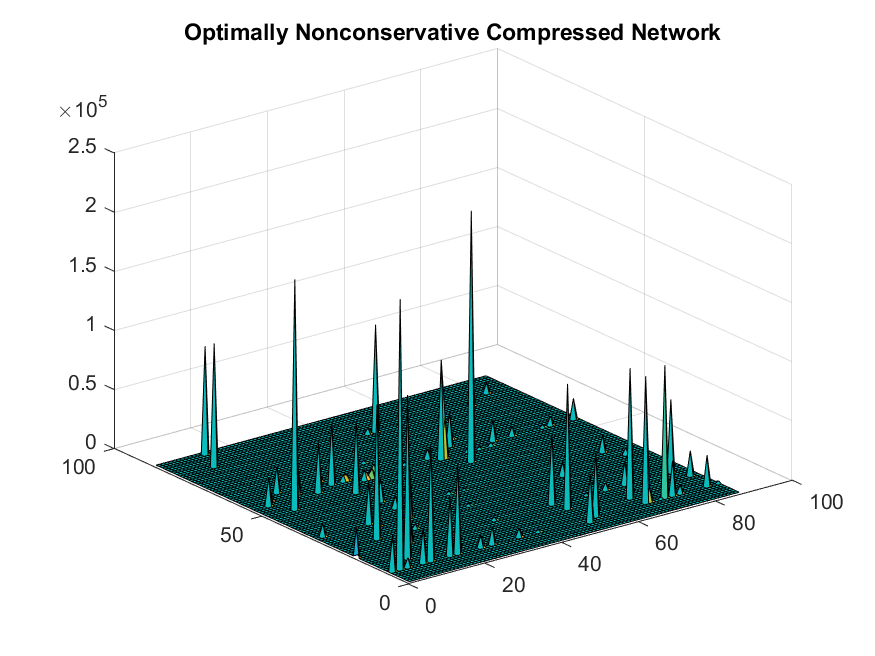}
\caption{Optimal nonconservative compressed network}
\end{subfigure}
\caption{Visualizations of the optimal interbank networks when minimizing $\rho^{\rm ES}_{80\%} \circ \Lambda^0$. Obligations (z-axis) are given in millions of euros. (Banks are ordered by country code.)}
\label{fig:networks}
\end{figure}

In contrast in the uncollateralized setting $\mu = 0$, as shown in Table~\ref{table:EBA-solvent}, the optimal and maximal compression algorithms provide identical systemic risk when attempting to minimize the expected number of defaults in the 20\% tail events.  In this scenario conservative compression outperforms bilateral compression and provides all the benefits of nonconservative compression; in such a setting we therefore find conservative compression the best as it allows all banks to remain with their original intended counterparties.  However, once we include collateralization of obligations, the maximal compression schemes increase the number of defaults whereas optimal compression still reduces the number of defaults in the nonconservative compression scenarios (the improvement in the bilateral and conservative compression settings are marginal in the number of defaults).  Finally, though the benefit of optimal compression appears to shrink (and the harm caused by maximal compression appears to grow) as collateralization increases, this is more than counteracted by the overall drop in the expected number of defaults as collateralization increases.
\begin{table}
\resizebox{\textwidth}{!}{
\begin{tabular}{|r|r||c|c|c|c|}
\cline{3-6}
\multicolumn{2}{c|}{~} & \textbf{Bilateral} & \textbf{Conservative} & \textbf{Nonconservative-0} & \textbf{Nonconservative}\\ \cline{3-6}\hline
\multirow{2}{*}{\boldmath$\mu=0.0$} & \textbf{Maximally Compressed:} & -0.599 & -2.196 & -2.196 & -2.196 \\ \cline{2-6}
& \textbf{Optimally Compressed:} & -0.599 & -2.196 & -2.196 & -2.196 \\ \hline \multicolumn{6}{c}{}\\[-1em] \hline
%
\multirow{2}{*}{\boldmath$\mu=0.2$} & \textbf{Maximally Compressed:} & 0.509 & 1.461 & 1.461 & 1.461 \\ \cline{2-6}
& \textbf{Optimally Compressed:} & -0.000 & -0.000 & -0.386 & -0.732 \\ \hline \multicolumn{6}{c}{}\\[-1em] \hline
%
\multirow{2}{*}{\boldmath$\mu=0.4$} & \textbf{Maximally Compressed:} & 0.703 & 2.872 & 2.872 & 2.872 \\ \cline{2-6}
& \textbf{Optimally Compressed:} & -0.000 & -0.000 & -0.119 & -0.169 \\ \hline 
\end{tabular}
}
\caption{Section~\ref{sec:EBA}: Improvements in $\rho^{\rm ES}_{80\%} \circ \Lambda^\#$ from the initial network $\tilde L$ in \# of banks (i.e., negative values indicate fewer banks have unpaid obligations) under different collateralization schemes $\mu$. 
The expected defaults in the 20\% tail under the initial network $\tilde L$ are $46.821$, $27.542$, and $10.349$ under $\mu = 0.0$, $0.2$, and $0.4$ respectively out of $87$ firms.}
\label{table:EBA-solvent}
\end{table}

\begin{remark}
Though we explicitly computed the systemic risk exhibited under maximal compression to demonstrate its (sub)optimality, we could equally have applied Proposition~\ref{prop:derivative} for this purpose.  Notably, the results of Proposition~\ref{prop:derivative} could also be used to verify the (sub)optimality of the maximally compressed network $L^{\max}$.  In fact, the directional derivatives of $\rho^{\rm ES}_{80\%} \circ \Lambda^0$ and $\rho^{\rm ES}_{80\%} \circ \Lambda^\#$ at $L^{\max}$ in the direction of the initial liabilities $\tilde L$ are negative in each case that we find maximal compression is suboptimal and positive when maximal compression is optimal (i.e., $\mu = 0$ under number of banks with unpaid obligations).
\end{remark}

We conclude this discussion by explicitly considering two features of network compression that can harm systemic risk in this case study.  
First, network compression functions like a system with both a senior and junior tranche of debt.  The senior tranche (i.e., obligations that are compressed away) are paid in full leaving all non-payments to occur solely within the junior tranche (i.e., the compressed network).  As we follow the Eisenberg-Noe clearing mechanism with pro-rata repayment, this seniority structure alters the relative liabilities and can spread losses differently in the system; in particular, if the entire interbank network is compressed away then all losses are absorbed by society by construction.  We wish to note that the size of these losses need not be equivalent to those found prior to compression.  
Second, within this case study, we assume that all collateral is held in cash (i.e., not subject to the systematic shock), but held as a risky asset (and thus subject to the systematic shock) when returned to the obligor when a liability is compressed away.  Therefore, as we are focused on the tail of the distribution -- through the use of the expected shortfall risk measure -- the collateral is, in some sense, worth more as collateral than as a risky asset.  This change in the treatment of the collateral assets reduces the total amount of assets available in the system for repayments of obligations.  As such, compression can result in additional losses or defaults.
Both of these effects are seen within the losses reported from maximal compression in Tables~\ref{table:EBA-society} and~\ref{table:EBA-solvent}.

\section{Conclusion}\label{sec:conclusion}
In this work we presented a general formulation for the optimal network compression problem and found it to be NP-hard.  We then focused on an objective function taking the form of systemic risk measures.  In particular, we consider systematic shocks in order to find tractable analytical forms for these systemic risk measures.  Such scenarios allow us to generalize the work of, e.g.,~\cite{AOT15} to consider the robustness of various network topologies. In particular, we found that for heterogeneous networks under systematic shocks, even for a simple heterogeneous three bank system, the robust fragility results of~\cite{AOT15} no longer hold generally. We use a genetic algorithm and show that, in a simple example of three-bank system, the algorithm performs as well as the optimal network found using an interior point algorithm for the rerouting problem and non-conservative compression. Our numerical studies on the European banking system show that if the systemic risk measures are not explicitly part of the objective function of the optimal compression problem, we may end up with suboptimal outcomes or, even worse, an increase in systemic risk. Our illustrative example on this system shows that the optimal network compression can find significant systemic risk improvements over the original network.  
Further studies to systematically investigate the sensitivity of these results to the networks under study would be of general interest. This includes generalizing the robust fragility results of~\cite{AOT15} beyond the symmetric i.i.d.\ assumptions used in that work.


We wish to highlight a few notable directions to extend the network compression problem.
First, more complex compression rules should be studied; specifically, we may wish to impose different constraints so that compression can only occur over some given subsets of nodes in the initial network, such as compressing liabilities by country. It would be critical to understand the relation between systemic risk reduction and the choice of this subset of the initial financial network that are permitted to compress. 
Another area of interest would be to determine how the presence of multiple CCPs could affect the optimal network compression outcome. Further research may determine the sensitivity of the outcome to the CCPs capitalizations. 
Moreover, designing blockchains and smart contract technology that could integrate network compression remains an open problem. 
Furthermore, as suggested by an anonymous referee, the optimal compression problem can be utilized in a two-stage optimization setting in order to select between different solutions of the maximal compression problem (as, e.g., non-conservative compression generally leads to multiple solutions).
We leave these and some other related issues on the effects of network compression on capital requirements and Basel III regulatory constraints for future research.

\bibliographystyle{apa}
\bibliography{bibtex2}

\appendix
\section{Proof of Theorem~\ref{thm:NP}}\label{sec:proofNP}

We will consider the minimum relative liability entropy as the objective function:
\begin{equation}
f(L) = -\sum_{i=1}^n \sum_{j=1}^n \pi_{ij} \log (\pi_{ij}),
\end{equation}
where $\pi_{ij}=\frac{L_{ij}}{\bar{p}_i}$ denotes the relative liability of firm $i$ toward firm $j$. We refer to Remark~\ref{rem:entropy} for the interpretation of this objective function for network compression.

Consider an instance of the NP-complete \textsc{subset sum} problem~\cite{karp1972reducibility}, defined by a set of positive integers $S=\{k_1, k_2, \dots, k_n\}$ and an integer target value $\theta \in \N$, we wish to know whether there exist a subset of these integers that sums up to $\theta$. We will show that this can be viewed as a special case for the optimal network compression  in the case of network rerouting, nonconservative and conservative compression models.

Let $K=k_1+k_2+\dots+k_n$ and $\alpha=\theta/K \in (0,1)$ (otherwise, the  \textsc{subset sum}  decision is trivial). Given an instance of the \textsc{subset sum} problem, we define a corresponding instance of the optimal rerouting compression by considering the bipartite network of Figure~\ref{fig:ReroutNP} with two core nodes $\{C_1,C_2\}$ on one side and $n$ periphery nodes $\{P_1, \dots, P_n\}$ on the other side. We set the initial liabilities $\tilde{L}$ as 
$\tilde{L}_{P_i,C_1}=\alpha k_i$  and $\tilde{L}_{P_i,C_2}=(1-\alpha) k_i$ for all $i=1, \dots, n$.  
Note that the total interbank receivables for $C_1$ and $C_2$ is respectively $\theta$ and $K-\theta$, while the total interbank liabilities is zero for $C_1$ and $C_2$. On the other hand, for all $i=1, \dots, n$, the total interbank liabilities for node $P_i$ is $k_i$ while the total interbank receivables is zero.

The optimal rerouting compression model is thus equivalent to finding $x_i\in[0,1]$ for $L_{P_i,C_1}=x_ik_i$ and $L_{P_i,C_2}=(1-x_i)k_i$ which satisfies
$\sum_{i=1}^n x_i k_i =\theta$, and minimizes 
\begin{align*}
f(L)=f(x_1,\dots,x_n)=&-\sum_{i=1}^n x_i \log\left(x_i\right)-\sum_{i=1}^n (1-x_i) \log\left(1-x_i \right).
\end{align*}
Since $x \log(x) + (1-x) \log(1-x) \leq 0$ for all $x\in[0,1]$ with equality only for $x=0,1$, we infer 
$f(x_1,\dots,x_n) \geq 0$,
and the equality holds if and only if there exists $x_i\in\{0,1\}$ such that $\sum_{i=1}^n x_i k_i =\theta$. Hence, if the solution to optimal rerouting compression model corresponds to $f(L)=0$ then  there exist a subset of $S$ that sums up to $\theta$.

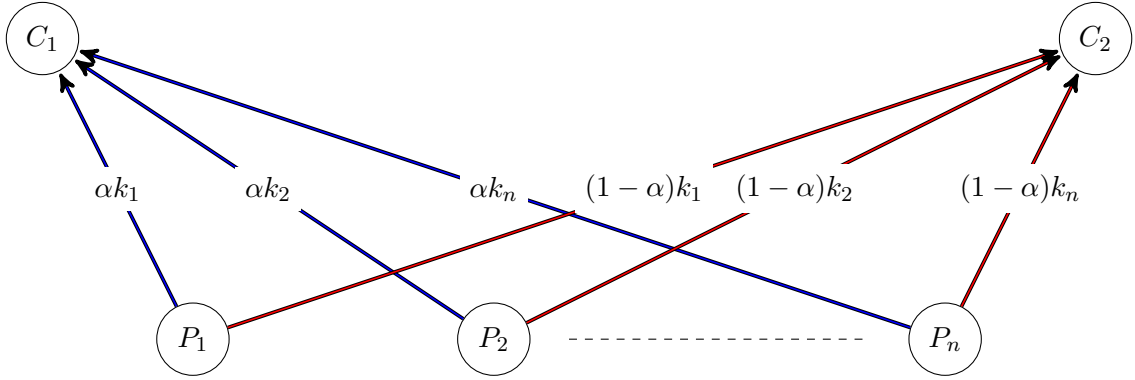
\begin{figure}
\centering
\begin{tikzpicture}[>=stealth',shorten >=1pt,node distance=4cm,on grid,initial/.style    ={}]

                  \node[state] (v1) at (0,0) {$P_1$}; 
        \node[state] (v2) at (4,0)  {$P_2$};
         \node[state] (vn) at (10,0) {$P_{n}$};
         \node[state] (c1) at (-2,4) {$C_1$};
         \node[state] (c2) at (12,4) {$C_2$};

        \draw[dashed] (5, 0) -- (9, 0);
        
        \tikzset{mystyle/.style={->,double=blue}} 
\tikzset{every node/.style={fill=white}}

        \path[->] (v1) edge [mystyle]   node {$\alpha k_1$} (c1);
       \path[->] (v2) edge [mystyle]   node {$\alpha k_2$} (c1);
       \path[->] (vn) edge [mystyle]   node {$\alpha k_n$} (c1); 
       
            \tikzset{mystyle/.style={->,double=red}} 
\tikzset{every node/.style={fill=white}} 
       
       \path[->] (v1) edge [mystyle]   node {$(1-\alpha) k_1$} (c2);
       \path[->] (v2) edge [mystyle]   node {$(1-\alpha) k_2$} (c2);
       \path[->] (vn) edge [mystyle]   node {$(1-\alpha) k_n$} (c2); 
        
\end{tikzpicture}

\caption{Reduction to \textsc{subset sum} NP-complete problem for optimal rerouting and nonconservative compression.}
\label{fig:ReroutNP}
\end{figure}

Further, note that in the bipartite network of Figure~\ref{fig:ReroutNP}, the optimal nonconservative compression is equivalent to optimal rerouting. Hence, the same argument shows that the optimal nonconservative compression is NP-hard.

\begin{figure}
\centering
\begin{tikzpicture}[>=stealth',shorten >=1pt,node distance=4cm,on grid,initial/.style    ={}]

                  \node[state] (v1) at (0,0) {$P_1$}; 
        \node[state] (v2) at (4,0)  {$P_2$};
         \node[state] (vn) at (10,0) {$P_{n}$};
          \node[state] (c0) at (5,4) {$C_0$};
         \node[state] (c1) at (-2,-4) {$C_1$};
         \node[state] (c2) at (12,-4) {$C_2$};

        \draw[dashed] (5, 0) -- (9, 0);

        \tikzset{mystyle/.style={->,double=black}} 
\tikzset{every node/.style={fill=white}}

        \path[->] (c0) edge [mystyle]   node {$ k_1$} (v1);
       \path[->] (c0) edge [mystyle]   node {$ k_2$} (v2);
       \path[->] (c0) edge [mystyle]   node {$ k_n$} (vn); 
        
        \tikzset{mystyle/.style={->,double=blue}} 
\tikzset{every node/.style={fill=white}}

        \path[->] (v1) edge [mystyle]   node {$ k_1$} (c1);
       \path[->] (v2) edge [mystyle]   node {$ k_2$} (c1);
       \path[->] (vn) edge [mystyle]   node {$ k_n$} (c1); 
       
            \tikzset{mystyle/.style={->,double=red}} 
\tikzset{every node/.style={fill=white}} 
       
       \path[->] (v1) edge [mystyle]   node {$k_1$} (c2);
       \path[->] (v2) edge [mystyle]   node {$k_2$} (c2);
       \path[->] (vn) edge [mystyle]   node {$k_n$} (c2); 
       
\tikzset{mystyle/.style={->,relative=true,in=120,out=50,double=black}}
   
 \path[->] (c1) edge [mystyle]   node {$K-\theta$} (c0);  
       
\tikzset{mystyle/.style={->,relative=true,in=-120,out=-50,double=black}}
   
 \path[->] (c2) edge [mystyle]   node {$\theta$} (c0);  
        
\end{tikzpicture}

\caption{Reduction to \textsc{subset sum} problem for optimal conservative compression model.}
\label{fig:ConsNP}
\end{figure}
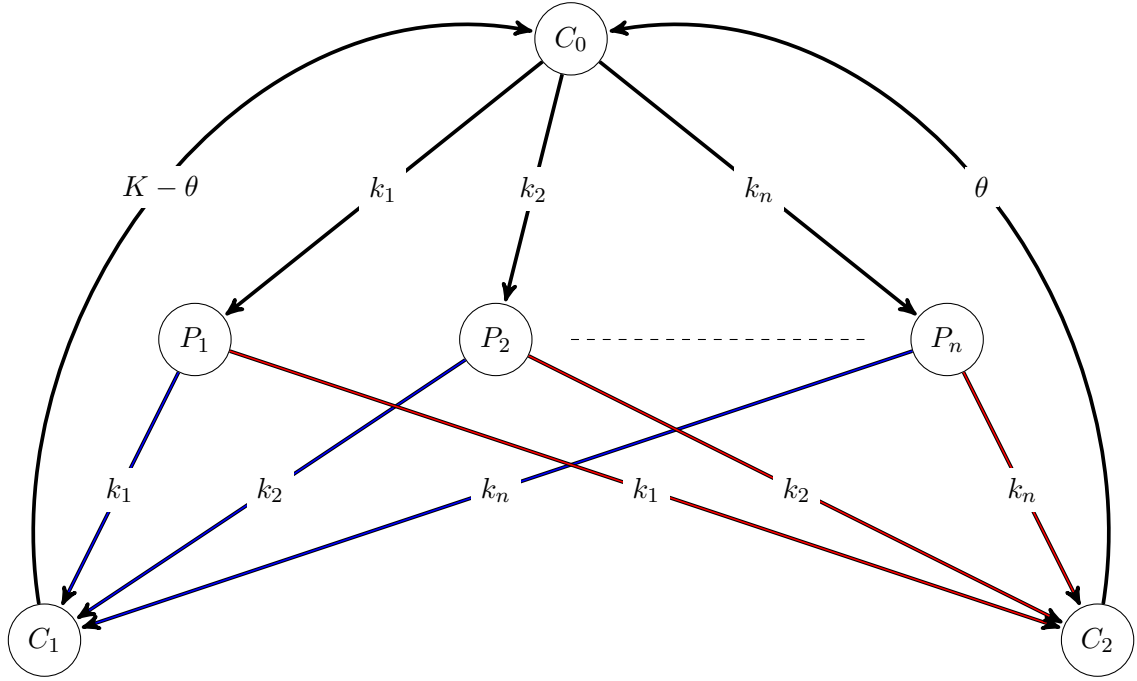

Hence, it only remains to prove the case of conservative compression model.  Given an instance of the \textsc{subset sum} problem, we define a corresponding instance of the  optimal conservative compression by considering the network of Figure~\ref{fig:ConsNP} with  three core nodes $\{C_0, C_1,C_2\}$ and $n$ periphery nodes $\{P_1, \dots, P_n\}$. We set the initial liabilities $\tilde{L}$ as 
$\tilde{L}_{P_i,C_1}=k_i$  and $\tilde{L}_{P_i,C_2}=k_i$ for all $i=1, \dots, n$.  
Note that the net interbank liabilities for $C_1$ and $C_2$ is respectively $-\theta$ and $-(K-\theta)$. On the other hand, for all $i=1, \dots, n$, the net interbank liabilities for node $P_i$ is $k_i$. Further, for the node $C_0$ the net interbank liabilities is zero.

It is then easy to show that the minimum relative liability entropy can be found by setting $L_{C_0, P_i}=0$ for all $i=1, \dots, n$ and consequently, $L_{C_1,C_0}=0$ and $L_{C_2,C_0}=0$. The optimization problem is thus equivalent to find $x_i\in [0,1]$ which gives $L_{P_i,C_1} = x_i k_i$ and $L_{P_i,C_2} = (1-x_i) k_i$ for all $i=1, \dots, n$. The $x_i$s should satisfy $\sum_{i=1}^n x_i k_i = \theta$ and minimizes again
\begin{align*}
f(x_1,\dots,x_n)=&-\sum_{i=1}^n x_i  \log\left(x_i\right)-\sum_{i=1}^n (1-x_i)  \log\left(1-x_i \right) \geq 0.
\end{align*}
The equality holds if and only if there exists $x_i\in\{0,1\}$ such that $\sum_{i=1}^n x_i k_i =\theta$. We conclude that if the solution to optimal conservative compression model corresponds to $f(L)=0$, then  there exist a subset of $S$ that sums up to $\theta$, which completes the proof.

\section{Overview of the genetic algorithm}\label{sec:gp}

The genetic algorithm is an optimization technique to numerically solve nonconvex problems.  Furthermore, this approach does not rely on the differentiability of the objective function.  While this algorithm can still return a local solution, this procedure is less prone to that then, e.g., the gradient descent approach.

Briefly, the idea behind this algorithm is to consider a population of candidate solutions that follow a notion of natural selection. In each step of the algorithm, the fitness (objective value) of all candidate solutions is evaluated; the best performing candidate solutions propagate to the next iteration along with ``children'' and ``mutations'' of these candidate solutions.  This forms a new population of candidate solutions which and the procedure is repeated until some notion of convergence is reached.
Algorithmically, this is presented below for the problem $\min f(x)$:
\begin{enumerate}
\item Generate a (random) population of $m$ feasible solutions $x_1,...,x_m$.
\item\label{step2} Evaluate the fitness of each candidate solution $f(x_1),...,f(x_m)$.
\item If a termination condition is reached (e.g., the best solution found thus far has not changed in $K$ generations) then terminate and return the optimal fitness level and solution.
\item Create the next generation of candidate solutions by:
    \begin{enumerate}
    \item Select $n \leq m$ of the population to consider (e.g., replace the worst performing solutions with copies of the best performing ones).
    \item Recombine some of the new generation of solutions to create new candidates through binary operations.
    \item Randomly alter some of the new generation of solutions through binary operations.
    \end{enumerate}
\item Return to step~\eqref{step2}.
\end{enumerate}

Within the case studies of Section~\ref{sec:casestudy}, we implemented the genetic algorithm using the ``ga'' function in the Global Optimization Toolbox of MATLAB.  Each of these case studies was implemented with the following initial populations of solutions: (i) the uncompressed network $\tilde L$, (ii) 50 random feasible compressed or rerouted networks, and (iii) if a compression problem (i.e., not the rerouting problem), then the maximally compressed network.  The use of the initial and maximally compressed networks in the initial population are to prevent the genetic algorithm from being too biased by the 50 random networks.  The recombining of solutions for the subsequent generation of the algorithm is implemented so as to use a weighted arithmetic mean of the two parent solutions; we utilize the ``crossoverarithmetic'' function in the Global Optimization Toolbox of MATLAB for this purpose.  All other parameters considered are the defaults for ``ga'' in MATLAB.

\section{Systematic shocks}\label{sec:systematic}

While the systemic risk measures provide a meaningful objective to minimize in order to optimize network compression, such constructs present additional computational challenges.  Namely, even a simple systemic risk measure such as $\rho^\E \circ \Lambda^0$ requires an exponential (in number of banks) time to compute explicitly~\cite{gourieroux2012}.  Computationally, this can be overcome with Monte Carlo simulations though that is subject to estimation errors.  Herein we will impose \emph{systematic} shocks on the endowments on the banks, i.e., a comonotonic setting on the stress scenarios $\xcal_L \in (\LL^2)^n$, on an aggregate function based around the collateralized Eisenberg-Noe clearing notion.  This is in contrast to~\cite{AOT15} in which shocks were i.i.d.  

\begin{remark}
For the purposes of this section, we consider a fixed (random) stress scenario, which we denote by $X \in (\LL^2)^n$, that is not dependent on the financial network; we will revisit this assumption to allow for stress scenarios to depend on the network $L \in \lcal$ in the following section.
\end{remark}

\begin{remark}\label{rem:general-stress}
As noted above, any general stress scenario $X \in (\LL^2)^n$ with a law-invariant risk measure can be utilized via Monte Carlo simulation of the systemic risk measure.  As such, though we highlight systematic shocks within this work, the theory developed is applicable more generally.
\end{remark}

Throughout this section let $C: \bbr_+ \to \bbr^n_+$ be a nondecreasing function and $q$ be some random variable such that $C(q) \in (\LL^2)^n$.  The stress scenario is then defined by $X = C(q)$.  For the purposes of this section we will focus on systemic risk measures constructed from Value-at-Risk and expected shortfall (as defined in Example~\ref{ex:riskmsr}) and aggregate functions that depend on the endowments and liability network through Eisenberg-Noe clearing payments only.  We refer to Example~\ref{ex:agg} for a brief discussion of the Eisenberg-Noe clearing problem; importantly, we define the clearing payments $p: \bbr^n_+ \times \lcal \to \bbr^n_{+}$ as a mapping from the endowments and liability network.  As detailed below, this setup allows for polynomial time computation of these meaningful systemic risk measures.  Much of this section follows from the logic of \cite{BF18comonotonic}.

The systematic shock setting allows us to determine threshold market values $q^*$ ($\bar q^*$) such that banks are on the cusp of failing to fulfill their obligations (respectively, such that the banks are on the cusp of defaulting); in particular, we take the view that $q$ denotes a systematic factor.  These values are presented in Definition~\ref{defn:q*} below.  Though presented as a mathematical formulation, \cite[Proposition 4.4]{BF18comonotonic} presents an iterative algorithm for finding $q^*$ (respectively, $\bar q^*$) taking advantage of the monotonicity of $C$.
\begin{definition}\label{defn:q*}
Define $q^*: \lcal \to \bbr^n_+$ so that $q_i^*(L)$ is the minimal value such that firm $i$ is making payments in full 
under the liability network $L \in \lcal$, i.e.\
\[q_i^*(L) = \inf\left\{t \geq 0 \; | \; p_i(C(t),L) \geq \sum_{j = 0}^n L_{ij}\right\}.\]
Similarly, define $\bar q^*: \lcal \to \bbr^n_+$ so that $\bar q_i^*(L)$ is the minimal value such that firm $i$ is not defaulting under the liability network $L \in \lcal$, i.e.\
\[\bar q_i^*(L) = \inf\left\{t \geq 0 \; | \; C_i(t) + \sum_{j = 1}^n \frac{L_{ji}}{\sum_{k = 0}^n L_{jk}} p_j(C(t),L) \geq \sum_{j = 0}^n L_{ij}\right\}.\]
\end{definition}
As noted above, we consider only those aggregate functions $\Lambda$ whose dependence on the endowments $x \in \bbr^n_+$ and liability network $L \in \lcal$ come through the Eisenberg-Noe clearing payments $p(x,L)$, i.e., $\Lambda(x,L) = \bar\Lambda(p(x,L))$ for every $x \in \bbr^n_+$ and $L \in \lcal$ for some monotonic function $\bar\Lambda: \bbr^n \to \bbr$.  In this setting, the threshold values $q^*$ provide a quick heuristic for the health of the financial system.  Notably, if $q_i^*(L) \geq q_i^*(\hat L)$ for every bank $i$ for two financial networks $L,\hat L \in \lcal$, then $\Lambda(C(t),L) \leq \Lambda(C(t),\hat L)$ for any $t \in \bbr_+$ and, thus, $\rho(\Lambda(C(q),L)) \geq \rho(\Lambda(C(q),\hat L))$ for any nonnegative random variable $q$.

Before proceeding to the representations for the systemic risk measures under systematic shocks, we need to introduce some notation that is provided in greater detail in~\cite{BF18comonotonic}.  Namely, we want to consider a piecewise linear construction for the clearing payments which follows from the fictitious default algorithm of~\cite{RV13}.  That is, 
\begin{align*}
p(x,L) := \Delta(\ind{p(x,L) < \sum_{j = 0}^n L_{ij}},L)x - \delta(\ind{p(x,L) < \sum_{j = 0}^n L_{ij}},L)
\end{align*} 
for any endowment $x \in \bbr^n_+$ and liability network $L \in \lcal$.  In this piecewise linear construction, the mappings $\Delta,\delta$ are defined by: 
\begin{align*}
\Delta(z,L) &:= \alpha_x\left(I - \alpha_L \diag(z)\Pi^\T\right)^{-1}\diag(z), \\
\delta(z,L) &:= -\left(I - \alpha_L \diag(z)\Pi^\T\right)^{-1}\left(I - (1-\mu)\diag(z)\right)\bar p, \\
\nonumber \bar p &:= L\vec{1}, \qquad \pi_{ij} := \frac{L_{ij}}{\bar p_i},
\end{align*}
for $z \in \{0,1\}^n$ denoting the set of defaulting institutions and $L \in \lcal$ is the liability network. 
In particular, for our comonotonic setting, we can simplify these notions as only a subset of possible sets of defaulting institutions is possible, i.e., we define:
\begin{align*}
\Delta_k(L) &:= \begin{cases} \Delta(\sum_{i = 1}^n \ind{q^*_i(L) \leq q^*_k(L)},L) &\text{if } k = 1,2,...,n \\ 0 &\text{if } k = 0, \end{cases} \\
\delta_k(L) &:= \begin{cases} \delta(\sum_{i = 1}^n \ind{q^*_i(L) \leq q^*_k(L)},L) &\text{if } k = 1,2,...,n \\ -\bar p &\text{if } k = 0. \end{cases}
\end{align*}
Finally, we will use the notation that $[k](L)$ is the index of the $k^{th}$ greatest value of $q^*(L)$, i.e., $q_{[1]}^*(L) \geq q_{[2]}^*(L) \geq ... \geq q_{[n]}^*(L)$.  To simplify following formulae, $q^*_{[0]}(L) \equiv +\infty$ and $q^*_{[n+1]}(L) \equiv 0$ for every liability network $L \in \lcal$.

We are now able to present the specific forms for the systemic risk measures under these systematic shocks.
\begin{proposition}\label{prop:systematic}
Consider a systematic stress scenario described by a nonnegative random variable $q$.
Consider an aggregate function $\Lambda$ whose dependence on the endowments $x \in \bbr^n_+$ and liability network $L \in \lcal$ come through the Eisenberg-Noe clearing payments $p(x,L)$, i.e., $\Lambda(x,L) = \bar\Lambda(p(x,L))$ for every $x \in \bbr^n_+$ and $L \in \lcal$ for some monotonic function $\bar\Lambda: \bbr^n \to \bbr$.
Let $q_{1-\gamma} \in \bbr_+$ denote the $(1-\gamma)$-quantile for $q$, then the Value-at-Risk for level $\gamma \in [0,1]$ can be computed as
\begin{align*}
\rho^{\rm VaR}_\gamma(\Lambda(C(q),L)) &= -\Lambda(C(q_{1-\gamma}),L) = -\bar\Lambda(\Delta_{[k]}(L) C(q_{1-\gamma}) - \delta_{[k]}(L))
\end{align*}
if $q_{1-\gamma} \in [q_{[k+1]}^*(L),q_{[k]}^*(L))$.
Additionally, the expected shortfall for level $\gamma \in [0,1)$ can be computed as
\begin{align*}
\rho^{\rm ES}_\gamma(\Lambda(C(q),L)) &= -\frac{1}{\bbp(q \leq q_{1-\gamma})} \sum_{k = 0}^n \E\left[\Lambda(C(q),L) \ind{q \in [q_{[k+1]}^*(L),q_{[k]}^*(L)) \cap [0,q_{1-\gamma}]}\right]\\
    &= -\frac{1}{\bbp(q \leq q_{1-\gamma})} \sum_{k = 0}^n \E\left[\bar\Lambda(\Delta_{[k]}(L) C(q) - \delta_{[k]}(L))\ind{q \in [q_{[k+1]}^*(L) \wedge q_{1-\gamma},q_{[k]}^*(L) \wedge q_{1-\gamma})}\right].
\end{align*}
\end{proposition}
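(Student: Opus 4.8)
The plan is to exploit comonotonicity to collapse the problem to a one-dimensional, monotone computation in the systematic factor $q$, and then to invoke the piecewise-linear (fictitious-default) representation of the clearing wealths on the regions where the defaulting set is constant.

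First I would establish the governing monotonicity. Since $\Psi(\cdot;x,L)$ is nondecreasing in the endowment $x$, its maximal fixed point $p(x,L)$ is nondecreasing in $x$ (by the Tarski argument of Proposition~\ref{prop:payments}), and hence the clearing wealths $V(x,L)$ from~\eqref{eq:wealths} are nondecreasing in $x$. Composing with the nondecreasing map $C$ and the monotone aggregator $\bar\Lambda$ (see Definition~\ref{defn:metric}), the real-valued map $t \mapsto \Lambda(C(t),L) = \bar\Lambda(V(C(t),L))$ is nondecreasing in $t$. In particular each bank's solvency region $\{t \ge 0 : V_i(C(t),L) \ge 0\}$ is an up-set with left endpoint $q_i^*(L)$ (Definition~\ref{defn:q*}), so for $t \in [q_{[k+1]}^*(L),q_{[k]}^*(L))$ the set of defaulting banks is constant, namely the $k$ institutions of largest threshold. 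On that interval the fictitious-default representation reduces to the affine form $V(C(t),L) = \Delta_{[k]}(L)C(t) - \delta_{[k]}(L)$, which is exactly what the stated formulas substitute.

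For the Value-at-Risk claim I would use that quantiles commute with nondecreasing transformations: because $Z := \Lambda(C(q),L)$ is a nondecreasing function of $q$, its $(1-\gamma)$-quantile is the image of the $(1-\gamma)$-quantile of $q$, so that
\[\rho^{\rm VaR}_\gamma(\Lambda(C(q),L)) = -\Lambda(C(q_{1-\gamma}),L).\]
Substituting the affine representation valid on the interval $[q_{[k+1]}^*(L),q_{[k]}^*(L))$ that contains $q_{1-\gamma}$ then gives $-\bar\Lambda(\Delta_{[k]}(L)C(q_{1-\gamma}) - \delta_{[k]}(L))$. For expected shortfall I would again use monotonicity to identify the conditioning event, $\{Z \le Z_{1-\gamma}\} = \{q \le q_{1-\gamma}\}$ up to a $\bbp$-null set, so that
\[\rho^{\rm ES}_\gamma(\Lambda(C(q),L)) = -\frac{1}{\bbp(q \le q_{1-\gamma})}\E\left[\Lambda(C(q),L)\ind{q \le q_{1-\gamma}}\right].\]
I would then partition $[0,q_{1-\gamma}]$ along the threshold intervals $[q_{[k+1]}^*(L),q_{[k]}^*(L))$, replace $\Lambda(C(q),L)$ by $\bar\Lambda(\Delta_{[k]}(L)C(q) - \delta_{[k]}(L))$ on each piece, and rewrite each intersection $[q_{[k+1]}^*(L),q_{[k]}^*(L)) \cap [0,q_{1-\gamma}]$ as $[q_{[k+1]}^*(L)\wedge q_{1-\gamma},\, q_{[k]}^*(L)\wedge q_{1-\gamma})$, producing the displayed sum over $k = 0,\dots,n$.

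The main obstacle is the careful handling of boundaries and atoms. I expect the delicate points to be: (i) showing the solvency region is closed at $q_i^*(L)$ so that the defaulting set is genuinely piecewise constant and the intervals are correctly half-open, which requires the appropriate one-sided continuity of $t \mapsto V(C(t),L)$ even though $C$ and the clearing map need not be continuous; and (ii) the quantile and conditioning translations when $q$ carries an atom at $q_{1-\gamma}$ or when $Z$ is only weakly increasing in $q$, so that the event equalities hold only up to $\bbp$-null sets and the VaR quantile indeed lands in the asserted half-open interval. These are precisely the bookkeeping issues already treated in the comonotonic analysis of~\cite{BF18comonotonic}, whose logic the argument follows.
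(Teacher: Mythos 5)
Your proposal is correct and follows essentially the same route as the paper's proof: monotonicity of $t \mapsto \Lambda(C(t),L)$ in the systematic factor yields the VaR identity (the paper establishes the quantile-commutation fact directly via two $\epsilon$-inequalities, which you instead invoke as a known property), and the expected shortfall is computed exactly as you describe, by identifying the conditioning event with $\{q \leq q_{1-\gamma}\}$, partitioning by the thresholds $q^*_{[k]}(L)$, and substituting the piecewise-affine representation $\Delta_{[k]}(L)C(q) - \delta_{[k]}(L)$. Your explicit flagging of the boundary/atom bookkeeping is slightly more careful than what the paper writes out, but the substance of the argument is identical.
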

The proof of this proposition is provided in Appendix~\ref{sec:proofSyst1}.  This result follows directly from the construction of Value-at-Risk and expected shortfall and the logic of Theorem 3.7 of~\cite{BF18comonotonic} with the collateral $\mu$.

We conclude this section with a consideration of a special case in which the expected shortfall can be described in closed form for our example aggregation functions from Example~\ref{ex:agg}.  Specifically, we consider a case in which the systematic factor follows a lognormal distribution.
\begin{corollary}\label{cor:systematic} 
Consider the setting of Proposition~\ref{prop:systematic} in which $q \sim \mathrm{LogN}([r-\sigma^2/2]T,\sigma^2 T)$, $C(t) := b e^{rT} + st$ with $b \in \bbr^n$ and $s \in \bbr^n_+$, and $\Lambda$ takes the form of the specific aggregate functions provided in Example~\ref{ex:agg}.  Let $\Phi$ denote the CDF for the standard normal distribution and $\Phi^{-1}$ is the inverse CDF.  For fixed level $\gamma \in [0,1)$:
\begin{align*}
\rho^{\rm ES}_\gamma(\Lambda^\#(C(q),L)) 
    &= -\frac{1}{1-\gamma} \sum_{k = 0}^{n-1} (n-k) \times \left(\Phi(-d_{2,[k]}^\gamma(L)) - \Phi(-d_{2,[k+1]}^\gamma(L))\right),\\
\rho^{\rm ES}_\gamma(\Lambda^\scal(C(q),L)) 
    &= -\frac{1}{1-\gamma} \sum_{k = 0}^{n-1} (n-k) \times \left(\Phi(-\bar d_{2,[k]}^\gamma(L)) - \Phi(-\bar d_{2,[k+1]}^\gamma(L))\right),\\
\rho^{\rm ES}_\gamma(\Lambda^\ncal(C(q),L)) 
    &= -\frac{1}{1-\gamma}\sum_{k = 0}^n \vec{1}^\T\left[\left(be^{rT}-\bar p\right)\left(\Phi(-d_{2,[k]}^\gamma(L)) - \Phi(-d_{2,[k+1]}^\gamma(L))\right) \right.\\
   &  \quad \quad \quad \quad \quad  \left.+ s\left(\Phi(-d_{1,[k]}^\gamma(L)) - \Phi(-d_{1,[k+1]}^\gamma(L))\right) + \left(A_k^\gamma(L) + B_k^\gamma(L)\right)\right], \\
\rho^{\rm ES}_\gamma(\Lambda^0(C(q),L)) 
    &= -\frac{1}{1-\gamma} \sum_{k = 0}^n \sum_{i = 1}^n \frac{L_{i0}}{\sum_{j = 0}^n L_{ij}} e_i^\T \Bigl[A^\gamma_k(L) + B^\gamma_k(L)\Bigr] ,
 \end{align*}
 where $e_j$ is the unit vector with a singular $1$ in its $j^{th}$ component and
\begin{align*}   
   A^\gamma_k(L) &=  \left(\Delta_{[k]}(L) b e^{rT} - \delta_{[k]}(L)\right)\left(\Phi(-d_{2,[k]}^\gamma(L)) - \Phi(-d_{2,[k+1]}^\gamma(L))\right),\\
   B^\gamma_k(L)&= \Delta_{[k]}(L) s \left(\Phi(-d_{1,[k]}^\gamma(L)) - \Phi(-d_{1,[k+1]}^\gamma(L))\right),\\
d_{1,k}^\gamma(L) &= \frac{-\log(q_k^*(L) \wedge q_{1-\gamma}) + (r + \frac{1}{2}\sigma^2)T}{\sigma\sqrt{T}}, \ d_{2,k}^\gamma(L) = d_{1,k}^\gamma(L) - \sigma\sqrt{T}, \\
\bar d_{1,k}^\gamma(L) &= \frac{-\log(\bar q_k^*(L) \wedge q_{1-\gamma}) + (r + \frac{1}{2}\sigma^2)T}{\sigma\sqrt{T}}, \ \bar d_{2,k}^\gamma(L) = \bar d_{1,k}^\gamma(L) - \sigma\sqrt{T}, \\
q_{1-\gamma} &= \exp((r-\sigma^2/2)T + \sigma\sqrt{T}\Phi^{-1}(1-\gamma)).
\end{align*}
\end{corollary}
The proof of this corollary is provided in Appendix~\ref{sec:proofSyst2}.  The form for these systemic risk measures is due to the construction of the expected shortfall as provided in Proposition~\ref{prop:systematic} and the manipulation of the lognormal distribution as in the Black-Scholes pricing formula for European options.

\subsection{Proof of Proposition~\ref{prop:systematic}}\label{sec:proofSyst1}

First we will consider Value-at-Risk with parameter $\gamma \in [0,1]$.  By construction, the comonotonic copula $C$ is nondecreasing in $q$ and, by definition, the aggregation function $\Lambda$ is nondecreasing in $C(q)$.  Therefore, for any $L \in \lcal$ and $\epsilon > 0$, 
\[\bbp\left(\Lambda(C(q),L) \leq \Lambda(C(q_{1-\gamma}+\epsilon),L)\right) \geq \bbp(q \leq q_{1-\gamma}+\epsilon) > 1-\gamma.\]
By definition of Value-at-Risk, this implies $$\rho_\gamma^{\rm VaR}(\Lambda(C(q),L)) \leq -\Lambda(C(q_{1-\gamma}),L).$$  
Additionally, for any $L \in \lcal$ and $\epsilon > 0$,
\[\bbp\left(\Lambda(C(q),L) \leq \Lambda(C(q_{1-\gamma}),L) - \epsilon\right) \leq \bbp(q < q_{1-\gamma}) \leq 1-\gamma.\]
Thus the representation for the Value-at-Risk is proven with the secondary representation following directly by the construction of $\Delta_{[k]}(L),\delta_{[k]}(L)$.

Consider now the expected shortfall with parameter $\gamma \in [0,1]$.  Similar to the Value-at-Risk above
\[\rho_\gamma^{\rm ES}(\Lambda(C(q),L)) = -\E[\Lambda(C(q),L) | q \leq q_{1-\gamma}]\]
for any $L \in \lcal$.  Partitioning $q$-space by $q^*$, we find
\begin{align*}
\rho_\gamma^{\rm ES}(\Lambda(C(q),L)) &= -\frac{\E[\Lambda(C(q),L) \ind{q \leq q_{1-\gamma}}]}{\bbp(q \leq q_{1-\gamma})}\\
    &= -\frac{1}{\bbp(q \leq q_{1-\gamma})} \sum_{k = 0}^n \E[\Lambda(C(q),L) \ind{q \in [q^*_{[k+1]}(L) , q^*_{[k]}(L)) \cap [0,q_{1-\gamma}]}]. 
\end{align*}
The second representation follows directly by the construction of $\Delta_{[k]}(L),\delta_{[k]}(L)$.

\subsection{Proof of Corollary~\ref{cor:systematic}}\label{sec:proofSyst2} 
First, by property of the lognormal distribution, $q_{1-\gamma} = \exp((r-\sigma^2/2)T + \sigma\sqrt{T}\Phi^{-1}(1-\gamma))$ and $\bbp(q \leq q_{1-\gamma}) = 1-\gamma$.  Therefore, for any of our 4 aggregation functions, 
\[\rho_\gamma^{\rm ES}(\Lambda(C(q),L)) = -\frac{1}{1-\gamma} \sum_{k = 0}^n \E[\Lambda(C(q),L) \ind{q \in [q^*_{[k+1]}(L) \wedge q_{1-\gamma} , q^*_{[k]}(L) \wedge q_{1-\gamma})}]\]
by Proposition~\ref{prop:systematic}.

Second, by construction of $d_{1,k}^\gamma(L),d_{2,k}^\gamma(L)$ and the logic from the Black-Scholes formula,
\begin{align*}
\Phi(-d_{1,k}^\gamma(L)) &= \E[q \ind{q \leq q_k^* \wedge q_{1-\gamma}}],\\
\Phi(-d_{2,k}^\gamma(L)) &= \bbp(q \leq q_k^* \wedge q_{1-\gamma}).
\end{align*}
Additionally, by construction, 
\begin{align*}
A_k^\gamma(L) + B_k^\gamma(L) &= \E\left[\left(\Delta_{[k]}(L)\left(b e^{rT} + sq\right) - \delta_{[k]}(L)\right) \ind{q \in [q^*_{[k+1]}(L) \wedge q_{1-\gamma} , q^*_{[k]}(L) \wedge q_{1-\gamma})} \right].
\end{align*}

Consider now the four specific aggregation functions.  For simplicity of notation, let $$\ind{k,L} := \ind{q \in [q^*_{[k+1]}(L) \wedge q_{1-\gamma} , q^*_{[k]}(L) \wedge q_{1-\gamma})}.$$

We have
\begin{align*}
\rho_\gamma^{\rm ES}(\Lambda^\#(C(q),L)) &= -\frac{1}{1-\gamma}\sum_{k = 0}^n \sum_{i = 1}^n \bbp(p_i(C(q),L) \geq \sum_{j = 0}^n L_{ij}, \; q \in [q^*_{[k+1]}(L) \wedge q_{1-\gamma} , q^*_{[k]}(L) \wedge q_{1-\gamma}))\\
    &= -\frac{1}{1-\gamma}\sum_{k = 0}^n \sum_{i = 1}^n \bbp(p_i(C(q),L) \geq \sum_{j = 0}^n L_{ij} | q \in [q^*_{[k+1]}(L) \wedge q_{1-\gamma} , q^*_{[k]}(L) \wedge q_{1-\gamma})) \\
    &\qquad\qquad\qquad\qquad \times \bbp(q \in [q^*_{[k+1]}(L) \wedge q_{1-\gamma} , q^*_{[k]}(L) \wedge q_{1-\gamma}))\\
    &= -\frac{1}{1-\gamma}\sum_{k = 0}^n (n-k) \bbp(q \in [q^*_{[k+1]}(L) \wedge q_{1-\gamma} , q^*_{[k]}(L) \wedge q_{1-\gamma}))\\
    &= -\frac{1}{1-\gamma}\sum_{k = 0}^{n-1} (n-k) \times \left(\Phi(-d_{2,[k]}^\gamma(L)) - \Phi(-d_{2,[k+1]}^\gamma(L))\right),\\
\rho_\gamma^{\rm ES}(\Lambda^\scal(C(q),L)) &= -\frac{1}{1-\gamma}\sum_{k = 0}^n \sum_{i = 1}^n \bbp(C_i(q) + \sum_{j = 1}^n \pi_{ji} p_j(C(q),L) \geq \sum_{j = 0}^n L_{ij}, \; q \in [\bar q^*_{[k+1]}(L) \wedge q_{1-\gamma} , \bar q^*_{[k]}(L) \wedge q_{1-\gamma}))\\
    &= -\frac{1}{1-\gamma}\sum_{k = 0}^n (n-k)\bbp(q \in [\bar q^*_{[k+1]}(L) \wedge q_{1-\gamma} , \bar q^*_{[k]}(L) \wedge q_{1-\gamma}))\\
    &= -\frac{1}{1-\gamma}\sum_{k = 0}^{n-1} (n-k) \times \left(\Phi(-\bar d_{2,[k]}^\gamma(L)) - \Phi(-\bar d_{2,[k+1]}^\gamma(L))\right),\\
%
\rho_\gamma^{\rm ES}(\Lambda^\ncal(C(q),L)) &= -\frac{1}{1-\gamma}\sum_{k = 0}^n \E\left[\vec{1}^\T \left(C(q) + \Pi^\T p(C(q),L) - \bar p\right) \ind{k,L}\right] \\
    &= -\frac{1}{1-\gamma}\sum_{k = 0}^n \E\left[\vec{1}^\T \left(C(q) + \Pi^\T\left(\Delta_{[k]}(L)C(q) - \delta_{[k]}(L)\right) - \bar p\right) \ind{k,L}\right] \\
    &= -\frac{1}{1-\gamma}\sum_{k = 0}^n \E\left[\left(\vec{1}^\T C(q) + \vec{1}^\T \left(\Delta_{[k]}(L)C(q) - \delta_{[k]}(L)\right) - \vec{1}^\T\bar p\right) \ind{k,L}\right] \\
    &= -\frac{1}{1-\gamma}\sum_{k = 0}^n \vec{1}^\T\left[\left(be^{rT}-\bar p\right)\left(\Phi(-d_{2,[k]}^\gamma(L)) - \Phi(-d_{2,[k+1]}^\gamma(L))\right)\right. \\ 
    &\left. \quad \quad +  s\left(\Phi(-d_{1,[k]}^\gamma(L)) - \Phi(-d_{1,[k+1]}^\gamma(L))\right) + \left(A_k^\gamma(L) + B_k^\gamma(L)\right)\right], \\
%
\rho_\gamma^{\rm ES}(\Lambda^0(C(q),L)) &= -\frac{1}{1-\gamma}\sum_{k = 0}^n \sum_{i = 1}^n \frac{L_{i0}}{\sum_{j = 0}^n L_{ij}} \E\left[p_i(C(q),L)\ind{k,L}\right]\\
    &= -\frac{1}{1-\gamma}\sum_{k = 0}^n \sum_{i = 1}^n \frac{L_{i0}}{\sum_{j = 0}^n L_{ij}} e_i^\T \E\left[e_i^\T \left(\Delta_{[k]}(L)C(q) - \delta_{[k]}(L)\right)\ind{k,L}\right]\\
    &= -\frac{1}{1-\gamma}\sum_{k = 0}^n \sum_{i = 1}^n \frac{L_{i0}}{\sum_{j = 0}^n L_{ij}} e_i^\T \left[A_k^\gamma(L) + B_k^\gamma(L)\right],
\end{align*}
as desired.

\end{document}